\documentclass[11pt,leqno]{amsart}
\usepackage{amsmath,amsthm,amssymb,mathtools}
\usepackage{tabularx}
\usepackage{cleveref}
\usepackage{comment}
\usepackage{url} 
\usepackage{tikz}


\newcommand{\FF}{{\mathbb F}}

\newcommand{\ZZ}{{\mathbb Z}}

\newcommand{\RR}{{\mathbb R}}

\newcommand{\xx}{\mathbf{x}}
\newcommand{\yy}{\mathbf{y}}
\newcommand{\cc}{\mathbf{c}}
\newcommand{\uu}{\mathbf{u}}

\newcommand{\vv}{\mathbf{v}}


\newcommand{\itemc}{\item[$\cdot$]}

\usepackage{color} 
\usepackage{ulem}

\newcommand{\rev}[1]{#1}
\newcommand{\com}[2]{}

\newenvironment{pproof}[1]{\vspace{0.5\baselineskip}\noindent\textbf{Proof of #1.} }{\hfill$\square$\par\vspace{0.5\baselineskip}}

\newtheorem{thm}{Theorem}[section]
\newtheorem{lem}[thm]{Lemma}
\newtheorem{cor}[thm]{Corollary}
\newtheorem{prop}[thm]{Proposition}

\theoremstyle{definition}
\newtheorem{df}[thm]{Definition}
\newtheorem{rem}[thm]{Remark}

\numberwithin{equation}{section}

\title[On Lattice Isomorphism Problems, LCD Codes over Finite Rings]
{On Lattice Isomorphism Problems for Lattices from LCD Codes over Finite Rings}

\author[Nishimura]{Yusaku Nishimura*}
\thanks{*Corresponding author}
\address{School of Fundamental Science and Engineering, 
Waseda University, 
Tokyo 169--8555, Japan
}
\email{n2357y@ruri.waseda.jp} 

\author[Takashima]{Katsuyuki Takashima}
\address{	Faculty of Education and Integrated Arts and Sciences, School of Education, Waseda University
}
\email{ktakashima@waseda.jp}

\author[Miezaki]{Tsuyoshi Miezaki}
\address{		Faculty of Science and Engineering, 
		Waseda University, 
		Tokyo 169--8555, Japan
}
\email{miezaki@waseda.jp} 


\keywords{  Post-quantum cryptography, Lattice-based cryptography, Lattice isomorphism problem, Code equivalence problem, Graph isomorphism problem
}

\subjclass[2020]{Primary 11T71; Secondary 14G50}

\begin{document}
\begin{abstract}
  These days, post-quantum cryptography based on the lattice isomorphism problem has been proposed.
  Ducas-Gibbons introduced the hull attack, which solves the lattice isomorphism problem for lattices obtained by Construction A from an LCD code over a finite field.
  Using this attack, they showed that the lattice isomorphism problem for such lattices can be reduced to the lattice isomorphism problem with the trivial lattice $\mathbb{Z}^n$ and the graph isomorphism problem.
  While the previous work by Ducas-Gibbons only considered lattices constructed by a code over a \textit{finite field}, 
  this paper considers lattices constructed by a code over a \textit{finite ring} $\mathbb{Z}/k\mathbb{Z}$, which is a more general case.
  In particular, when $k$ is odd, an odd prime power, or not divisible by $4$, we show that the lattice isomorphism problem can be reduced to the lattice isomorphism problem for $\mathbb{Z}^n$ and the graph isomorphism problem.
\end{abstract}
\maketitle


\section{Introduction}

\subsection{Background}

Lattice-based cryptosystems that are based on the lattice isomorphism problem (LIP), such as Hawk, are expected to be post-quantum secure~\cite{DPPW2022}.
Hawk is a round 2 candidate for additional signature NIST PQC competition, and it generates and verifies signatures efficiently on all devices, including low-end ones~\cite{Hawk}.
Due to these features, Hawk has attracted attention, and therefore, it is important to analyze the hardness of LIP.
Ducas and Gibbons~\cite{DG2023} introduce the hull attack, which solves LIP for lattices obtained by Construction A of LCD codes over finite fields,  
where LCD codes are codes that have no intersection with their dual codes.  
This property plays an important role in the hull attack, and they show that the LIP for such lattices are reduced to the LIP with the trivial lattice $\mathbb{Z}^n$ (called $\mathbb{Z}$LIP) and the graph isomorphism problem. 
However, the hull attack can only be applied to lattices constructed from codes over finite fields, not to lattices generated by codes over finite rings, such as $\mathbb{Z}/k\mathbb{Z}$.
In this paper, we consider LIP constructed from codes not over finite fields, but over finite rings $\ZZ/k\ZZ$.


\subsection{Previous work}

Ducas-Gibbons~\cite{DG2023} defined the $s$-hull of a lattice, and they show the reduction of LIP of lattices obtained from the code over finite fields to $\mathbb{Z}$LIP and the graph isomorphism problem \rev{for graphs with vertex size $2n$} using the $s$-hull\rev{, where $n$ is the lattice dimension.} 
It is known that $\mathbb{Z}$LIP can be solved in $2^{\frac{0.292n}{2}+o(n)}$ time by the Blockwise Korkine-Zolotarev (BKZ) algorithm~\cite{DW2022}.
Additionally, the graph isomorphism problem can be solved in quasipolynomial time~\cite{B2016}, and the time complexity of this LIP is as long as that of $\mathbb{Z}$LIP. 
This result also serves as a counterexample to the LIP hardness conjecture~\cite{DW2022} using the gap, which is one of the lattice invariants.
Currently, as a conjecture on the difficulty of LIP, one involving a new lattice invariant called the hullgap is being considered~\cite{DG2023}.

\subsection{Our results}

The results shown by Ducas-Gibbons are only for the lattices obtained by the LCD codes over prime fields with positive characteristic.
We generalize the hull attack of Ducas-Gibbons for the codes over rings $\ZZ/k\ZZ$.  
\begin{itemize}
  \itemc When $k$ is odd, the LIP for lattices obtained by the free and LCD codes can be reduced to the $\ZZ$LIP and graph isomorphism problem \rev{for graphs with vertex size $2n$} (\Cref{thm:maina}).
  \itemc When $k$ is an odd prime power, we can assume the code property only with LCD (\Cref{thm:main}).  
  \itemc When $k$ is even and not divisible by $4$, the free and LCD codes obtained by LIP can also be reduced to the $\ZZ$LIP and graph isomorphism problem \rev{for graphs with vertex size $3n$} (\Cref{thm:mainb}).
\end{itemize}
Ducas and Gibbons provided a counterexample to the conjecture regarding the hardness of LIP proposed by Ducas and van Woerden~\cite{DW2022}.  
This counterexample involves certain $q$-ary lattices when $q$ is a prime.
Accordingly, they define a new lattice invariant, called the \emph{hullgap}.  
Using the hullgap, they modify the conjecture in Section~6 of~\cite{DG2023}, under which these lattices are no longer counterexamples.
Our results extend the construction of such $q$-ary lattices to the case where $q$ is not divisible by $4$.  
All of these lattices serve as counterexamples only to the original conjecture in~\cite{DW2022}, but not to the modified version in~\cite{DG2023}.  
This supports the validity of the revised conjecture and contributes to the security analysis of LIP.

\rev{\subsection{Technical novelty}
To generalize to codes over rings $\ZZ/k\ZZ$, there are two obstacles.
One is that we can not use the result of \cite{BAA2019} directly.
To overcome this, we generalize the result of \cite{BAA2019} for free and LCD codes over $\ZZ/k\ZZ$.
The other obstacle is that the element $2 \in \ZZ/k\ZZ$ is often not a unit.
To overcome this, we define the extended signed closure of the code and obtain the result when $k$ is not divisible by $4$.
At the beginning of \Cref{sec:4}, we provide a more detailed description of the technical overview.
\com{A}{2}
}

\subsection{Related work}
As in LWE-based lattice cryptography, there exist two types of attacks for LIP-based cryptosystems: attacks for {\em general} LIP problems and ones for {\em structured} LIP problems. 
For the latter case, we see a series of tremendous progresses recently~\cite{mpw24,cfmpw25,apw25}.
However, the attacks have not yet been succeeded to break Hawk, which is the main target for the structured LIP attack. 
In contrast to them, our attack target is general LIP as is given in~\cite{DG2023}.

\subsection{Outline}
This paper is organized as follows.  
In \Cref{sec:2}, we provide definitions of lattice, code, and graph, and introduce \rev{computational} problems related to them.  
In \Cref{sec:3}, we present the previous work on the hull attack by Ducas-Gibbons.  
In \Cref{sec:4}, we generalize the hull attack for lattices constructed by codes in the finite ring $\ZZ/k\ZZ$ and consider the case when $k$ is odd and when $k$ is even but not divisible by $4$.


\section{Preliminary}\label{sec:2}
\subsection{Lattices}

Let $L$ be a discrete subgroup of an $n$-dimensional vector space over $\RR$.
If there exist $\vv_1, \ldots, \vv_m$ such that 
\[
  L \coloneq \left\{ \sum_{i=1}^{m} x_i \vv_i \mid x_i \in \ZZ \right\},
\]
then $L$ is called a \textbf{lattice}.
The \textbf{dual of the lattice} $L$, denoted by $L^*$, is defined as follows:
\[
  L^* \coloneq \{ \xx \in \RR^n \mid \xx \cdot \yy \in \ZZ\text{ for all $\yy \in L$}\}.
\]
\rev{Let $s$ be a real number.
Then, we call $L \cap sL^*$ the \textbf{$s$-hull} of $L$ and denote it by $H_s(L)$.\com{A}{10}
}

\begin{df}[Lattice Isomorphism]
  Lattices $L_1$ and $L_2$ are isomorphic if there exists an orthonormal matrix $O$ such that $L_1 = O L_2$.
\end{df}

\rev{\begin{df}[Lattice Isomorphism Problem]
  \textbf{The Decision-Lattice Isomorphism Problem (Decision-LIP)} is to determine whether $L_1$ and $L_2$ are isomorphic, 
  and \textbf{the Search-Lattice Isomorphism Problem (Search-LIP)} is to find the orthonormal matrix $O$ that transforms $L_1$ into $L_2$.
  In particular, when $L_2 = \mathbb{Z}^n$, this problem is called $\mathbb{Z}$LIP.
  In this paper, we focus only on the Search-LIP and simply denote LIP as this second problem.\com{A}{4,5}
\end{df}}


\rev{\begin{rem}
In general, the Code Equivalence Problem (CEP) and the Graph Isomorphism Problem (GI), which are defined later, also have the two types: the search version and the decision version.  
In this paper, we \rev{mainly focus on} the search versions of these problems.  
Thus, hereinafter, when we refer to LIP, CEP, or GI, we mean the corresponding search problems.
\end{rem}\com{A}{5}}
\subsection{Codes}

Let $R$ be a finite ring and let $R^n$ be a free module over $R$.
Then, the linear code $C$ over $R$ with length $n$ is a submodule of $R^n$.
Especially, if $R$ is a finite field of order $q$ and the dimension of $C$ is $m$, then $C$ is called an \textbf{$[n,m]_q$ linear code}.
When a code $C$ over $R$ is denoted as $C = \{cG \mid c \in R^m\}$, where $G$ is a matrix over $R$ with size $m \times n$ and no row of $G$ can be represented as a linear combination of the other rows, then $G$ is called the \textbf{generator matrix} of $C$.
A dual code of $C$, denoted as $C^{\perp}$, is defined as follows:
\[
  \rev{C^\perp\coloneq\{\xx\in R^n \mid \xx\cdot \cc=0\text{ for all $\cc\in C$ }\}.}
\]
$C \cap C^\perp$ is called the \textbf{hull}, and if the hull is trivial, meaning the hull is $\{0\}$, then $C$ is called an \textbf{LCD code} (linear complementary dual code).


\begin{df}[Code Equivalence]
  Codes $C_1$ and $C_2$ are called equivalent if there exist a diagonal matrix $D$ and a permutation matrix $P$ such that the following holds:
  \[
  C_1 = DPC_2.
  \]
  In particular, when all the elements of $D$ are $\pm 1$, this equivalence is called \textbf{signed permutation code equivalence}, and if $D$ is the identity matrix, it is called \textbf{permutation code equivalence}.
\end{df}

\rev{
\begin{df}[Code Equivalence Problem]
  The \textbf{code equivalence problem} is the problem of determining the equivalence of $C_1$ and $C_2$, and, when they are equivalent, finding the matrices $D$ and $P$.
  In particular, when $C_1$ and $C_2$ are signed permutation code equivalent, this problem is called the \textbf{signed permutation code equivalence problem (SPEP)}, and when $C_1$ and $C_2$ are permutation code equivalent, it is called the \textbf{permutation code equivalence problem (PEP)}.
  \com{A}{4}
\end{df}
}

In general, the code equivalence problem of codes over a finite field $\FF_q$ with length $n$ is equivalent to the permutation code equivalence of closure codes, which are obtained from the original codes and are over a finite field $\FF_q$ with length $(q-1)n$~\cite{ss13a, ss13b}.
Hereinafter, \rev{let $\ZZ_k$ denote $\ZZ/k\ZZ$}, and we consider the codes over a finite ring $\ZZ_k$.
\rev{One of the methods} of obtaining a lattice from codes over $\ZZ_k$ is called \textbf{Construction A}.
Let $\iota_k$ be a function from $\ZZ$ to $\ZZ_k$ such that if an integer $x$ is congruent to $y$ modulo $k$, then $\iota_k(x)=y$.
Additionally, we extend the domain of $\iota_k$ to $\ZZ^n$ such that $\iota_k(\vv) = (\iota_k(v_1), \ldots, \iota_k(v_n))$, where $\vv = (v_1, \dots, v_n)$.
Then, the lattice $C + k\ZZ^n$ obtained from the code $C$ using Construction A is defined as follows~\cite{cs99}:
\[
  C + k\ZZ^n \coloneqq \{\xx \in \ZZ^n \mid \iota_k(\xx) \in C\}.
\]


\subsection{Graphs}

\rev{A graph $G=(V,E)$ consists of a finite set $V$ of vertices and a set $E$ of edges, 
where each edge is a 2-element subset of $V$.}
If there exists a mapping $w$ from $E$ to $\mathbb{R}$, then $(V,E,w)$ is called a weighted graph. 
The adjacency matrix of $(V,E,w)$ is a $|V| \times |V|$ matrix with entries
\[
A_{ij} =
\begin{cases}
  w(\{i,j\}) & \text{if } \{i,j\} \in E, \\
  0 & \text{if } \{i,j\} \notin E.
\end{cases}
\]


\begin{df}[Graph Isomorphism]
  Let $G_1 = (V_1, E_1, w_1)$ and $G_2 = (V_2, E_2, w_2)$ be weighted graphs, and let $A_1$ and $A_2$ be the adjacency matrices of $G_1$ and $G_2$, respectively. 
  $G_1$ and $G_2$ are called isomorphic if there exists a permutation matrix $P$ such that $A_1 = P^\top A_2 P$ holds. 
\end{df}

\rev{\begin{df}[Graph Isomorphism Problem]
  \textbf{Graph isomorphism problem (GI)} involves distinguishing whether two given graphs are isomorphic, and if they are, determining the permutation matrix $P$.\com{A}{4}
\end{df}
}
GI is known to be solvable in quasipolynomial time~\cite{B2016}.
Additionally, the permutation code equivalence problem of LCD codes over finite fields can be reduced to GI~\cite{BAA2019}.


\section{The hull attack proposed by Ducas-Gibbons}\label{sec:3}

\rev{Ducas-Gibbons~\cite{DG2023} investigates the LIP of lattices obtained by Construction A of an LCD code over a finite field. 
They show that the LIP can be reduced to the $\ZZ$LIP and GI using the $p$-hull of the lattice, $H_p(L)$.}
In this section, we present the flow of this reduction and \rev{introduce their lemmas for comparing with our results.}
Hereinafter, let $p$ be a prime number.

\subsection{Outline of the hull attack~\cite{DG2023}}

\begin{figure}[t]
  \begin{tikzpicture}[node distance=2.5cm, auto]

  \node (X) at (0,0) {LIP};
  \node (Y) at (2.5,0.25) {$\mathbb{Z}$LIP};
  \node (Z) at (2.5,-0.25) {SPEP};
  \node (W) at (5,0.25) {$\mathbb{Z}$LIP};
  \node (R) at (5,-0.25) {PEP};
  \node (S) at (7.5,0.25) {$\mathbb{Z}$LIP};
  \node (Wr) at (7.5,-0.25) {GI};

  \draw[->] (X) -- (2.25,0) node[midway, above] {{\tiny \Cref{lem:9}}};
  \draw[->] (2.75,0) -- (4.75,0) node[midway, below] {{\tiny \Cref{lem:10}}} node[midway, below, yshift=-6.5pt]{{\tiny \Cref{lem:11}}};
  \draw[->] (5.25,0) -- (7.25,0) node[midway, below] {{\tiny \Cref{thm:1}}};

  \end{tikzpicture}

  \caption{Reductions of computational problems in the hull attack~\cite{DG2023}} \label{fig:1}
\end{figure}
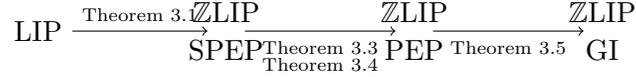

\Cref{fig:1} describes the reductions of computational problems in the hull-attack by Ducas-Gibbons~\cite{DG2023}. 

\rev{\Cref{lem:9} implies that the LIP of specific lattices can be reduced to the $\ZZ$LIP and SPEP of LCD codes with length $n$.
\Cref{lem:10} and \Cref{lem:11} suggest that this SPEP can also be reduced to the PEP of LCD codes with length $2n$.}
Additionally, using \Cref{thm:1}, we can reduce this PEP to GI.
As a result, for any $p$, the LIP of lattices obtained by Construction A from LCD codes over $\FF_p$ can be reduced to the $\ZZ$LIP and GI (\Cref{fig:1}).
\rev{Note that \Cref{lem:11} treats with the case that $p\neq2$. 
But when $p=2$, SPEP is equivalent to PEP, so by the same argument, this reduction also holds.}

\subsection{\rev{Lemmas for the hull attack\com{B}{4}}}

We introduce some lemmas and terms that are used in the hull attack.

\begin{lem}[Lemma $9$~\cite{DG2023}]\label{lem:9}
  Let $C$ be an \rev{$[n,k]_p$ LCD code.}
  For any $i \in \{1,2\}$, let $O_i$ be an orthonormal matrix of size $n$, and let $L_i = O_i(C + p\mathbb{Z}^n)$.
  Then, the LIP of $L_1$ and $L_2$ can be reduced to the $\mathbb{Z}$-LIP and SPEP of LCD codes with length $n$.
\end{lem}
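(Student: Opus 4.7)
My plan is to use the $p$-hull of $L_i$ as an orthogonal invariant that extracts a canonical copy of $\ZZ^n$, reducing the residual ambiguity to a signed permutation equivalence of codes. First I compute the $p$-hull of $L := C + p\ZZ^n$ directly. A short calculation shows $(C+p\ZZ^n)^* = \frac{1}{p}(C^\perp + p\ZZ^n)$, so $pL^* = C^\perp + p\ZZ^n$ and
\[
H_p(L) = L \cap pL^* = (C \cap C^\perp) + p\ZZ^n.
\]
The LCD hypothesis forces $C\cap C^\perp = \{\0\}$, whence $H_p(L) = p\ZZ^n$. Since the $p$-hull commutes with orthonormal maps (i.e., $H_p(OL) = O \cdot H_p(L)$), I obtain $H_p(L_i) = pO_i\ZZ^n$ for $i=1,2$.

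Next I feed the rescaled hull $\frac{1}{p}H_p(L_i) = O_i\ZZ^n$ to the $\ZZ$LIP oracle to recover an orthonormal $\tilde{O}_i$ with $\tilde{O}_i\ZZ^n = O_i\ZZ^n$. Because the only norm-one integer vectors are $\pm e_j$, any orthonormal stabilizer of $\ZZ^n$ is a signed permutation matrix, so $\tilde{O}_i = O_i S_i$ for some such $S_i$. Applying $\tilde{O}_i^{-1}$ to $L_i$ yields
\[
\tilde{L}_i := \tilde{O}_i^{-1} L_i = S_i^{-1}(C + p\ZZ^n),
\]
which is the Construction A lattice of the signed permutation translate $S_i^{-1}\cdot C$ of $C$ over $\FF_p$.

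Now any orthonormal $O$ with $L_1 = O L_2$ induces an orthonormal isomorphism $\tilde{L}_1 = (\tilde{O}_1^{-1} O \tilde{O}_2)\tilde{L}_2$. Taking $p$-hulls on both sides gives $p\ZZ^n = (\tilde{O}_1^{-1}O\tilde{O}_2)(p\ZZ^n)$, so the bridging orthonormal matrix stabilizes $p\ZZ^n$ and is itself a signed permutation matrix $S$. Reducing modulo $p$, $S$ is precisely a signed permutation equivalence between the $\FF_p$-codes $S_1^{-1}\cdot C$ and $S_2^{-1}\cdot C$, which the SPEP oracle returns; the LIP solution is then $O = \tilde{O}_1\, S\, \tilde{O}_2^{-1}$.

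The main obstacle is the hull identification at the very start: one must rigorously establish the dual formula $(C+p\ZZ^n)^* = \frac{1}{p}(C^\perp + p\ZZ^n)$ and verify that the intersection $L \cap pL^*$ respects the decomposition into $C$ and $p\ZZ^n$, so that the LCD property cleanly collapses $H_p(L)$ to $p\ZZ^n$. Once this invariant is pinned down, the remainder is routine bookkeeping about orthonormal stabilizers of integer lattices. I expect this structure to break when $C\cap C^\perp$ is nontrivial (hence the LCD hypothesis) and to require additional care when generalizing $p$ to a non-prime modulus $k$, where the dual formula and the notion of signed equivalence interact less cleanly.
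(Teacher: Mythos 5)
Your proposal is correct and follows essentially the same route as the paper: compute the $p$-hull via $H_p(L)=(C\cap C^\perp)+p\ZZ^n$ (the paper's \Cref{lem:4}/\Cref{lem:4a}), use LCD to collapse it to $p\,O_i\ZZ^n$, call the $\ZZ$LIP oracle to recover $O_i$ up to a signed permutation, and leave the residual ambiguity to an SPEP oracle. The only cosmetic difference is that you make the final recombination $O=\tilde{O}_1 S \tilde{O}_2^{-1}$ explicit, while the paper additionally notes how to recover the modulus from the Gram determinant.
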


\begin{df}[Signed closure code~\cite{DG2023}]
  Let $C$ be an $[n,k]_p$ code.
  The signed closure code of $C$, denoted as $C^\pm$, is a $[2n,k]_p$ code and is defined as follows:
  \[
    C^\pm \coloneq \{(x_1, -x_1, x_2, -x_2, \ldots, x_n, -x_n) \mid (x_i)_{1 \leq i \leq n} \in C\}
  \]
\end{df}


\begin{lem}[Lemma $10$~\cite{DG2023}]\label{lem:10}
  Let $C$ and $C'$ be $[n,k]_p$ codes. 
  Then, $C$ and $C'$ are signed permutation code equivalent if and only if $C^\pm$ and $C'^\pm$ are permutation code equivalent.
\end{lem}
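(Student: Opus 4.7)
Our plan is to prove each direction separately, with the forward direction being a direct construction and the reverse requiring a canonical pair-swap involution to extract the signed data.

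For the forward direction, suppose $C = DPC'$ with $D = \operatorname{diag}(d_1,\ldots,d_n)$, $d_i \in \{\pm1\}$, and $P$ the permutation matrix of some $\sigma \in S_n$. I would construct an explicit permutation $\pi \in S_{2n}$ as follows. Viewing the $2n$ coordinates of $C'^\pm$ as paired into $\{2i-1, 2i\}$ for $i=1,\ldots,n$, I let $\pi$ send pair $i$ to pair $\sigma(i)$, and within each pair either fix or swap the two coordinates according to whether $d_{\sigma(i)}$ equals $+1$ or $-1$. Applying $\pi$ to a general codeword $(x_1,-x_1,\ldots,x_n,-x_n) \in C'^\pm$ then produces the signed closure of $DPx \in C$, giving $\pi(C'^\pm)\subseteq C^\pm$; equality follows by dimension count or by running the argument backwards.

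For the reverse direction, the key observation is that the permutation $\tau_0 \in S_{2n}$ swapping $2i-1$ and $2i$ for every $i$ satisfies $\tau_0 c = -c$ for all $c \in C^\pm$, as is immediate from the form $(x_1,-x_1,x_2,-x_2,\ldots)$. The analogous $\tau_0' \in S_{2n}$ acts as $-1$ on $C'^\pm$. Given $\pi(C'^\pm) = C^\pm$, the conjugate $\pi\tau_0'\pi^{-1}$ is a permutation that also acts as $-1$ on $C^\pm$, and hence, up to a permutation fixing $C^\pm$ pointwise, coincides with $\tau_0$. Under this identification, $\pi$ is forced to send each obvious pair $\{2i-1,2i\}$ of $C'^\pm$ to an obvious pair $\{2\sigma(i)-1,2\sigma(i)\}$ of $C^\pm$, where the within-pair action records a sign $d_i \in \{\pm1\}$. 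Packaging these data yields $D=\operatorname{diag}(d_1,\ldots,d_n)$ and the permutation matrix $P$ of $\sigma$; a direct check on a codeword then verifies $C = DPC'$.

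The hardest part will be the step that forces $\pi$ to preserve the pair structure. In the generic case where the columns of a generator matrix for $C$ are pairwise non-proportional, the pointwise stabilizer of $C^\pm$ is trivial, so $\pi \tau_0' \pi^{-1} = \tau_0$ exactly and $\pi$ is automatically pair-preserving. In the degenerate case (e.g.\ when $C$ has two coordinates that are always proportional), several pair structures on $C^\pm$ realize the $-1$-action, and one must instead compose $\pi$ with a suitable element of $\Aut(C'^\pm)$ to align the pair structures. Justifying that such a modification always exists is the main technical point; I would handle it by showing that the permutation automorphism group of $C^\pm$ acts transitively on the set of involutions that realize multiplication by $-1$ as a coordinate permutation, so that any $\pi$ can be adjusted to a pair-preserving one.
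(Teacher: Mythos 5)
Your forward direction is correct and is essentially the paper's argument in permutation language: sending pair $i$ to pair $\sigma(i)$ and swapping within a pair exactly when the corresponding sign is $-1$ is the same as the paper's identities $DT = TP'$ and $PT = TP''$ for the matrix $T$ whose columns generate the signed closure.

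The reverse direction, however, has a genuine gap, and it sits exactly where you say the hardest part is. You reduce everything to the claim that $\Aut(C'^\pm)$ acts transitively by conjugation on the fixed-point-free involutions that realize multiplication by $-1$, so that any equivalence $\pi$ can be corrected to one conjugating $\tau_0'$ to $\tau_0$ and hence mapping pairs to pairs. You do not prove this claim, and it is not a routine fact: it is a statement about the structure of the full permutation automorphism group of $C^\pm$, which is precisely what is out of reach in the degenerate cases (repeated or proportional columns, zero columns) that you flag. As stated, the proof is incomplete. The paper (following Sendrier--Simos) avoids this entirely by working with the columns of the generator matrix rather than with group actions on the code: the columns of $G^\pm = GT$ are explicitly $(c_1,-c_1,\ldots,c_n,-c_n)$, so permutation equivalence of $C^\pm$ and $C'^\pm$ gives an equality of column multisets $\{c_1,-c_1,\ldots\}=\{d_1,-d_1,\ldots\}$; grouping each side into the unordered pairs $\{c_i,-c_i\}$ and $\{d_j,-d_j\}$, the multiset of pairs on the two sides must coincide, so a pair-respecting matching can be chosen directly, and the within-pair comparison ($c_i=d_j$ versus $c_i=-d_j$) reads off $D$ and $P$. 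This handles the degenerate cases with no automorphism-group analysis (equal pairs are simply matched arbitrarily), whereas those are exactly the cases where your transitivity claim still needs a proof. I would recommend replacing the conjugation argument by this multiset argument on generator-matrix columns.
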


\begin{lem}[Lemma $11$~\cite{DG2023}]\label{lem:11}
  If $p \neq 2$, $C$ is LCD if and only if $C^\pm$ is LCD.
\end{lem}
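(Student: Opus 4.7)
The plan is to compute the dual code $(C^\pm)^\perp$ explicitly and then show that $C^\pm \cap (C^\pm)^\perp$ is in $\FF_p$-linear bijection with $C \cap C^\perp$ whenever $2$ is invertible in $\FF_p$. From that, the claimed equivalence is immediate.

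First I would unpack the orthogonality condition. An arbitrary vector $(y_1, z_1, \ldots, y_n, z_n) \in \FF_p^{2n}$ is orthogonal to every generator $(x_1, -x_1, \ldots, x_n, -x_n)$ of $C^\pm$ iff $\sum_{i=1}^n (y_i - z_i) x_i = 0$ for all $(x_i) \in C$; that is, iff $(y_i - z_i)_i \in C^\perp$. Specializing this description to the elements of $C^\pm$ itself, a vector $(x_1, -x_1, \ldots, x_n, -x_n) \in C^\pm$ lies in $(C^\pm)^\perp$ exactly when $(2 x_i)_i \in C^\perp$.

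The key step is then to invoke $p \neq 2$: since $2 \in \FF_p^\times$ and $C^\perp$ is a linear subspace, $(2 x_i)_i \in C^\perp \Longleftrightarrow (x_i)_i \in C^\perp$. Combined with the requirement $(x_i) \in C$, I would conclude that the injective linear map $(x_i) \mapsto (x_1, -x_1, \ldots, x_n, -x_n)$ restricts to an $\FF_p$-linear bijection from $C \cap C^\perp$ onto $C^\pm \cap (C^\pm)^\perp$. Hence one intersection is trivial iff the other is, which is the stated equivalence.

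I do not anticipate any real obstacle; the whole argument reduces to the two observations above. The only subtle point is that the hypothesis $p \neq 2$ is used precisely to cancel the factor of $2$ in the equivalence $(2 x_i) \in C^\perp \Leftrightarrow (x_i) \in C^\perp$. In characteristic $2$ the construction of $C^\pm$ degenerates (the sign flips become identities), and one can check directly that $C^\pm$ is always self-orthogonal in that case, so the biconditional of the lemma fails as soon as $C \neq \{0\}$; this explains why the hypothesis cannot be dropped.
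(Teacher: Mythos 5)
Your proposal is correct and follows essentially the same route as the paper: the paper proves this (in its generalized form, \Cref{lem:11a}) by observing that $\xx^\pm \cdot \yy^\pm = 2\,\xx\cdot\yy$ and invoking the invertibility of $2$, which is exactly your cancellation step, and your explicit description of $(C^\pm)^\perp$ just makes the resulting bijection of hulls slightly more explicit than the paper's version. Your closing remark about the degeneration at $p=2$ is also consistent with the paper's observation that the $p=2$ case is handled separately (there SPEP already coincides with PEP, so the signed closure is not needed).
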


\begin{lem}[Theorem $5$~\cite{BAA2019}]\label{thm:1}
  Let $C_1$ and $C_2$ be LCD codes over a finite field $\FF_p$, and let $G_1$ and $G_2$ be the generator matrices of $C_1$ and $C_2$, respectively.
  Then, $C_1$ and $C_2$ are permutation code equivalent if and only if the two symmetric matrices $G_1^\top (G_1 G_1^\top)^{-1} G_1$ and $G_2^\top (G_2 G_2^\top)^{-1} G_2$ are graph isomorphic.
\end{lem}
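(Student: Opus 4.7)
The plan is to identify the symmetric matrix $\Pi_G := G^\top(GG^\top)^{-1}G$ with the canonical projector onto the code $C$ along $C^\perp$, and then to observe that permutation equivalence of codes corresponds precisely to simultaneous row/column permutation of these projectors.

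First I would verify that $\Pi_G$ is well-defined and depends only on $C$. The matrix $GG^\top$ is invertible over $\FF_p$ precisely because $C$ is LCD: $xGG^\top = 0$ is equivalent to $xG$ lying in both $C$ and $C^\perp$, so the kernel vanishes iff $C \cap C^\perp = \{0\}$. A direct check then shows that $\Pi_G$ is symmetric and idempotent with row space equal to the row space of $G$, namely $C$ itself (the inclusion $\supseteq$ comes from $G\Pi_G = G$, and $\subseteq$ is immediate from the factorization). Moreover, replacing $G$ by $SG$ for any invertible $S$ leaves $\Pi_G$ unchanged, so one may unambiguously write $\Pi_C$.

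For the forward direction, suppose $C_1$ and $C_2$ are permutation equivalent via a permutation matrix $P$, so that $G_2 P$ is a generator matrix of $C_1$. Since $\Pi_C$ is independent of the generator matrix, take $G_1 := G_2 P$; then $G_1 G_1^\top = G_2 P P^\top G_2^\top = G_2 G_2^\top$, whence
\[
\Pi_{C_1} \;=\; P^\top G_2^\top (G_2 G_2^\top)^{-1} G_2 P \;=\; P^\top \Pi_{C_2} P,
\]
which is exactly the graph isomorphism relation for the two symmetric matrices.

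For the reverse direction, suppose $\Pi_{C_1} = P^\top \Pi_{C_2} P$ for some permutation matrix $P$. Taking row spaces of both sides and using that $P^\top$ is invertible together with the fact that the row space of $\Pi_{C_i}$ equals $C_i$, one obtains $C_1 = C_2 P$, which is precisely permutation equivalence. The only nontrivial ingredient is the linear-algebraic fact used in the first step that $GG^\top$ is invertible over $\FF_p$ iff $C$ is LCD; once this is in hand, the remainder is essentially bookkeeping, so I do not anticipate a serious obstacle.
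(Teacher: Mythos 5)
Your proposal is correct and follows essentially the same route as the paper's argument (given there for the generalization, \Cref{thm:1a}): identify $G^\top(GG^\top)^{-1}G$ as the projector onto $C$ along $C^\perp$, note it is invariant under change of generator matrix, obtain the conjugation relation $\Pi_{C_1}=P^\top\Pi_{C_2}P$ in the forward direction, and recover $C_1=C_2P$ from the row space of the projector in the converse. The only (harmless) difference is that you verify the image of the projector directly from $G\Pi_G=G$ and the factorization, whereas the paper routes through the decomposition $\FF_p^n=C\oplus C^\perp$ and the explicit inverse of the stacked matrix $\begin{pmatrix}G\\H\end{pmatrix}$.
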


\Cref{lem:9} is obtained from the property of the $p$-hull of a lattice, as shown in the following lemma.

\begin{lem}[Lemma $4$~\cite{DG2023}]\label{lem:4}
Let $C$ be an $[n,k]_p$ code and let $L = C + p\ZZ^n$ be a lattice. Then,
\[
  H_p(L) = C \cap C^\perp + p\ZZ^n.
\]
\end{lem}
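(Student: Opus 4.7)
The plan is to unfold the definition $H_p(L) = L \cap pL^*$ and reduce it to a computation about $C \pmod p$ together with its orthogonal complement.

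First I would compute the dual lattice $pL^*$ explicitly. For $L = C + p\ZZ^n$ (where $C$ is identified with some set of integer lifts, so $L$ is well defined as a subset of $\ZZ^n$), a vector $x \in \RR^n$ lies in $L^*$ iff $x \cdot y \in \ZZ$ for every $y \in L$. Plugging in $y = pz$ for arbitrary $z \in \ZZ^n$ forces $px \in \ZZ^n$, so $x \in \tfrac{1}{p}\ZZ^n$. Writing $x = \tfrac{1}{p}w$ with $w \in \ZZ^n$, the remaining condition $\tfrac{1}{p} w \cdot c \in \ZZ$ for all $c \in C$ becomes $w \cdot c \equiv 0 \pmod{p}$ for all $c \in C$, i.e.\ $\iota_p(w) \in C^\perp$. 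Hence $L^* = \tfrac{1}{p}(C^\perp + p\ZZ^n)$ and therefore $pL^* = C^\perp + p\ZZ^n$.

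Next I would combine this with the definition of the $p$-hull to get
\[
H_p(L) \;=\; L \cap pL^* \;=\; (C + p\ZZ^n) \cap (C^\perp + p\ZZ^n),
\]
and verify the two inclusions with the desired set $(C \cap C^\perp) + p\ZZ^n$. The inclusion $\supseteq$ is immediate from $C \cap C^\perp \subseteq C$ and $C \cap C^\perp \subseteq C^\perp$. For the inclusion $\subseteq$, I would take $x$ in the intersection and reduce mod $p$: since $x \in C + p\ZZ^n$ gives $\iota_p(x) \in C$ and $x \in C^\perp + p\ZZ^n$ gives $\iota_p(x) \in C^\perp$, we obtain $\iota_p(x) \in C \cap C^\perp$, and hence $x \in (C \cap C^\perp) + p\ZZ^n$.

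The only subtle point, which I expect to be the main obstacle, is correctly identifying the code $C \subseteq \FF_p^n$ with a subset of $\ZZ^n$ throughout, since the notation $C + p\ZZ^n$ implicitly uses a choice of lift; the verification that $L$ and the sets $(C + p\ZZ^n)$, $(C^\perp + p\ZZ^n)$ and $(C \cap C^\perp) + p\ZZ^n$ are all independent of that choice (because any two lifts differ by an element of $p\ZZ^n$) should be noted but is routine. Once that is in place, both the computation of $pL^*$ and the mod-$p$ argument for the set equality are direct.
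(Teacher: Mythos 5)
Your proposal is correct and follows essentially the same route as the paper's proof of its generalization (\Cref{lem:4a}): establish $pL^* = C^\perp + p\ZZ^n$, then intersect with $L = C + p\ZZ^n$ and reduce mod $p$ to get $(C \cap C^\perp) + p\ZZ^n$. Your write-up is in fact slightly more explicit about the final set equality and the well-definedness of lifts, but there is no substantive difference in approach.
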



The proof of \Cref{lem:10} follows the same argument as in~\cite{ss13a}, and \Cref{lem:11} can be proven by considering the inner product of two vectors in the code.

In~\cite{BAA2019}, it was shown that given an LCD code $C$ and its generator matrix $G$, the matrix $G^\top(GG^\top)^{-1}G$ is a \rev{projection} from $\FF_q^n$ to $C$.
\rev{Using this matrix, the PEP of the LCD code is reduced to GI and obtain \Cref{thm:1}.\com{B}{4}}
Note that if $C$ is over a finite field, then $C$ is LCD if and only if $GG^\top$ is non-singular.
Hence, we can always define $G^\top(GG^\top)^{-1}G$.
\rev{However, in general, if $C$ is defined over $\ZZ_k$, this does not hold. For the case of our finite ring, see \Cref{thm:1a} and its proof.}


\section{Our results}\label{sec:4}

The previous work of Ducas and Gibbons considers lattices obtained by Construction A from codes over finite fields.  
In this section, we generalize this result to lattices obtained from codes over the finite ring $\ZZ_k$.
To generalize for the codes over $\ZZ/k\ZZ$, one of the obstacles is the existence of a projection from $\ZZ^k$ onto the code.
\Cref{thm:1a} shows that if a code is free and LCD, then $G^\top(GG^\top)^{-1}G$ is such a projection, where $G$ is a generator matrix of a code.
Another obstacle is that $2$ is not a unit when $k$ is even.
To resolve this, we define the extended signed closure of a code (\Cref{df:esc}) and prove the reduction of SPEP of codes to PEP of the extended signed closure of codes when $k$ is not divisible by $4$ (\Cref{lem:10b}).
Moreover, to prove \Cref{lem:10b}, we also provide a proof of \Cref{lem:10a} that closely follows the structure of the proof of \Cref{lem:10}, but introduces permutation matrices, which were not used in the original proof.
  In \Cref{ssec:4-1}, we prove that when $k$ is odd, LIP for lattices constructed from codes over $\ZZ_k$ can be reduced in a manner similar to that of Ducas-Gibbons (\Cref{thm:maina}), and we also focus on the special case when $k$ is a power of an odd prime (\Cref{thm:main}).
  In \Cref{ssec:4-2}, we prove a similar reduction when $k$ is even and not divisible by $4$ (\Cref{thm:mainb}).

\subsection{When $k$ is odd}\label{ssec:4-1}

To generalize the results of Ducas-Gibbons to $\ZZ_k$ codes, we need \rev{some} restrictions on the codes.
\begin{df}[Free code]
  Let $R$ be a ring and let $C$ be a linear code over $R$ with generator matrix $G$.
  $C$ is called a \textbf{free code} if all the row vectors of $G$ are linearly independent over $R$.
\end{df}
Note that $C$ is free and LCD if and only if $GG^\top$ is non-singular.
Additionally, $C$ is always free if $C$ is a code over a finite field.
When $k$ is odd, we have the following theorem.
\begin{thm}\label{thm:maina}
  Let $k$ be an odd integer and let $C$ be a free and LCD code over $\ZZ_{k}$ with length $n$.
  For any $i \in \{1,2\}$, let $O_i$ be an orthonormal matrix of size $n$, and let $L_i = O_i(C + k\ZZ^n)$.
  Then, the LIP of $L_1$ and $L_2$ can be reduced to $\ZZ$LIP and GI with vertex size $2n$.
\end{thm}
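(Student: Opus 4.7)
The plan is to mimic exactly the chain of reductions depicted in \Cref{fig:1}, replacing $\FF_p$ throughout by $\ZZ_k$ and working under the extra hypothesis that $C$ is free and LCD. All three horizontal arrows in \Cref{fig:1} must be re-established in this setting.

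First, I would establish the ring-theoretic analogue of \Cref{lem:4}, namely that $H_k(L) = (C \cap C^\perp) + k\ZZ^n$ for $L = C + k\ZZ^n$. The proof of \Cref{lem:4} goes through without change, because it only uses how Construction A interacts with dual codes and dual lattices, both of which are defined identically over $\ZZ_k$. Because $C$ is LCD, $H_k(L_i) = O_i(k\ZZ^n)$, which is similar to $k \cdot \ZZ^n$. Solving $\ZZ$LIP on the hulls (after normalizing the scale by $1/k$) produces orthonormal matrices that transport each $L_i$ to a lattice whose hull is exactly $k\ZZ^n$; applying \Cref{lem:9}'s argument, any isomorphism preserving $k\ZZ^n$ and sending $C + k\ZZ^n$ to $C' + k\ZZ^n$ is induced by a signed permutation, so LIP reduces to $\ZZ$LIP together with SPEP of free LCD $\ZZ_k$-codes of length $n$.

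Next, I would handle the SPEP-to-PEP step by generalizing \Cref{lem:10} and \Cref{lem:11} to $\ZZ_k$. The proof of \Cref{lem:10} is purely combinatorial and transfers verbatim, so $C$ and $C'$ are signed permutation equivalent iff $C^\pm$ and $C'^\pm$ are permutation equivalent. For the LCD property of $C^\pm$, the key computation is $\langle(x_1,-x_1,\ldots,x_n,-x_n),(y_1,-y_1,\ldots,y_n,-y_n)\rangle = 2\langle \xx,\yy\rangle$; since $k$ is odd, $2$ is a unit in $\ZZ_k$, so this inner product vanishes iff $\langle \xx,\yy\rangle = 0$, and hence $(C^\pm)^\perp$ corresponds to $C^\perp$ under the same doubling embedding. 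This gives $C^\pm \cap (C^\pm)^\perp = \{0\}$ from $C \cap C^\perp = \{0\}$. I also need to check that $C^\pm$ is free: the $\ZZ_k$-module map $\xx \mapsto (x_1,-x_1,\ldots,x_n,-x_n)$ is an injection, so a basis of $C$ maps to a basis of $C^\pm$, and the corresponding generator matrix has full row rank. Therefore $C^\pm$ is a free LCD $\ZZ_k$-code of length $2n$, and SPEP of $C,C'$ reduces to PEP of $C^\pm, C'^\pm$.

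Finally, I would invoke \Cref{thm:1a} (the paper's ring-theoretic extension of \Cref{thm:1}, applicable precisely because of the free-plus-LCD hypothesis, which makes $GG^\top$ invertible over $\ZZ_k$ and so permits forming the projector $G^\top(GG^\top)^{-1}G$). This reduces PEP of the two length-$2n$ codes $C^\pm$, $C'^\pm$ to GI on the weighted graphs whose $2n \times 2n$ adjacency matrices are the projectors onto $C^\pm$ and $C'^\pm$, which is GI with vertex size $2n$. Composing the three reductions yields the theorem.

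The main obstacle is not in this theorem itself but is quarantined inside the supporting lemmas, especially \Cref{thm:1a}: over a ring, LCD alone does not imply $GG^\top$ is invertible, which is why the freeness hypothesis is essential. Once freeness is assumed and $GG^\top \in \GL(\ZZ_k)$, the projector $G^\top(GG^\top)^{-1}G$ is well-defined, and the BAA2019-style equivalence between PEP and GI on the projectors carries over. The two remaining steps transfer almost mechanically from \cite{DG2023} once one observes that $\gcd(2,k)=1$ makes the scalar $2$ harmless in the signed-closure inner product computation.
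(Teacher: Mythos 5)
Your proposal follows exactly the same chain of reductions as the paper: the $k$-hull computation (the paper's \Cref{lem:4a}) feeding into the LIP-to-($\ZZ$LIP + SPEP) step (\Cref{lem:9a}), the signed-closure argument for SPEP-to-PEP with the observation that $2$ is a unit for odd $k$ (\Cref{lem:10a,lem:11a}), and the projector $G^\top(GG^\top)^{-1}G$ under the free-plus-LCD hypothesis to reach GI (\Cref{thm:1a}). The argument is correct and matches the paper's proof in both structure and substance.
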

To prove \Cref{thm:maina}, we generalize \Cref{lem:9} - \ref{thm:1} to $\ZZ_k$ codes, which are \Cref{lem:9a} - \ref{thm:1a}.
In fact, \Cref{lem:9} and \Cref{lem:10} can be generalized to $\ZZ_k$ codes for all $k$.
\begin{lem}\label{lem:9a}
  For any integer $k$, let $C$ be an LCD code over $\ZZ_{k}$ with length $n$.
  For any $i \in \{1, 2\}$, let $O_i$ be an orthonormal matrix of size $n$, and let $L_i = O_i(C + k\ZZ^n)$.
  Then, the LIP of $L_1$ and $L_2$ can be reduced to the $\ZZ$-LIP and the SPEP for free and LCD codes.
\end{lem}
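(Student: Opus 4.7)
The plan is to follow the structure of the Ducas-Gibbons proof of \Cref{lem:9}, replacing the prime $p$ throughout by the general modulus $k$. The key geometric fact that survives the generalization is that for a Construction-A lattice $L=O(C+k\ZZ^n)$ built from an LCD code, the $k$-hull equals the rotated integer lattice $kO\ZZ^n$. Once this is in hand, a $\ZZ$LIP call on the two hulls aligns $L_1$ and $L_2$ up to a signed permutation of coordinates, and the residual ambiguity is precisely an SPEP instance on the underlying codes.

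First I would prove a ring-valued generalization of \Cref{lem:4}: for any code $C$ over $\ZZ_k$,
\[
  H_k(C + k\ZZ^n) = (C \cap C^\perp) + k\ZZ^n.
\]
The standard Construction A duality $(C+k\ZZ^n)^* = \tfrac{1}{k}(C^\perp + k\ZZ^n)$ carries over from the field setting unchanged, since the proof only uses the bilinear pairing modulo $k$, which is well-defined on $\ZZ_k$ for any $k$. Intersecting $L$ with $kL^*$ and reducing modulo $k$ gives $C\cap C^\perp$, which vanishes by the LCD hypothesis, so $H_k(C+k\ZZ^n)=k\ZZ^n$; rotating by $O_i$ then yields $H_k(L_i)=kO_i\ZZ^n$.

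Next I would invoke the $\ZZ$LIP oracle on the rescaled hulls $\tfrac{1}{k}H_k(L_i)=O_i\ZZ^n$ to obtain orthonormal matrices $V_i$ satisfying $V_i\ZZ^n = O_i\ZZ^n$; each such $V_i$ factors as $V_i=O_iP_i$ for some signed permutation $P_i$. Any orthonormal $U$ with $UL_2=L_1$ must carry $H_k(L_2)$ to $H_k(L_1)$, which forces $V_1^{-1}UV_2$ to preserve $k\ZZ^n$ and hence to be a signed permutation matrix $S$. Rewriting $UL_2=L_1$ as $S\cdot V_2^{-1}L_2 = V_1^{-1}L_1$ and reducing modulo $k\ZZ^n$ yields the code-level equation
\[
  S(P_2^{-1}C) = P_1^{-1}C.
\]
Because $P_i^{-1}C$ is a signed-permutation image of $C$, both sides inherit the LCD property (and freeness, whenever $C$ is free), so this is an SPEP instance in the advertised target class; feeding it to the SPEP oracle produces $S$, and $U=V_1SV_2^{-1}$ is the desired isomorphism.

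The main technical obstacle is the ring-valued duality statement above: although its analogue over $\FF_p$ is folklore, extending it to $\ZZ_k$ requires checking that nothing beyond the $\ZZ_k$-bilinear pairing is actually used in the proof. A secondary bookkeeping point is that each $V_i$ is only determined up to the signed-permutation ambiguity $P_i$, so the SPEP instance is naturally presented in terms of the twisted codes $P_i^{-1}C$ rather than $C$ itself; verifying that LCD-ness (and freeness) are preserved by signed permutations is routine but must be recorded to confirm that the SPEP instance truly lies in the claimed class.
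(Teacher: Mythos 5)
Your proposal is correct and follows essentially the same route as the paper: both rest on the ring-valued hull identity $H_k(C+k\ZZ^n)=(C\cap C^\perp)+k\ZZ^n$ (the paper's generalization of \Cref{lem:4}), use it with the LCD hypothesis to get $H_k(L_i)=kO_i\ZZ^n$, call the $\ZZ$LIP oracle to align each lattice up to a signed permutation, and identify the residual ambiguity as an SPEP instance on (signed-permutation twists of) $C$. The only cosmetic difference is that the paper additionally notes how to recover $k$ from $\det(G_iG_i^\top)$ before forming the hull.
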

\begin{lem}\label{lem:10a}
  Two linear codes $C_1$ and $C_2$ over $\ZZ_k$ with length $n$ are signed permutation code equivalent if and only if $C_1^\pm$ and $C_2^\pm$ are permutation code equivalent.
\end{lem}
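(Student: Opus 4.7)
The plan is to prove both implications by constructing the equivalences explicitly, following the template of \Cref{lem:10} but introducing explicit permutation matrices to handle the ring setting. For the easy direction ($\Rightarrow$), I assume $C_1 = DPC_2$ with $D = \mathrm{diag}(d_1, \ldots, d_n)$, $d_i \in \{\pm 1\}$, and $P$ the permutation matrix of some $\sigma \in S_n$. I define a permutation $\tau \in S_{2n}$ inducing a bijection between the canonical pairs $\{2i-1, 2i\}$ and $\{2\sigma(i)-1, 2\sigma(i)\}$, preserving the internal order when $d_i = +1$ and reversing it when $d_i = -1$. Letting $Q$ be the associated permutation matrix, a direct check on a generic codeword $\cc = (x_1, -x_1, \ldots, x_n, -x_n) \in C_2^\pm$ shows that $Q\cc = (y_1, -y_1, \ldots, y_n, -y_n)$ with $\yy = DP\xx \in C_1$, giving $QC_2^\pm = C_1^\pm$.

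For the reverse direction ($\Leftarrow$), I start with a permutation matrix $Q$, with associated permutation $\tau \in S_{2n}$, realizing $C_1^\pm = QC_2^\pm$ and aim to produce a signed permutation $(D,P)$ with $C_1 = DPC_2$. The key step is to replace $Q$, if necessary, by another permutation matrix $Q'$ realizing the same equivalence but additionally \emph{respecting the pair structure}, meaning the associated $\tau'$ maps each canonical pair $\{2i-1, 2i\}$ to another canonical pair $\{2j-1, 2j\}$. The defining relations $(Q\cc)_{2j-1} + (Q\cc)_{2j} = 0$ for all $\cc \in C_2^\pm$ translate into $c_{\tau^{-1}(2j-1)} + c_{\tau^{-1}(2j)} = 0$; for each $j$, either this already forces $\{\tau^{-1}(2j-1), \tau^{-1}(2j)\}$ to equal a canonical pair, or the cancellation reveals a hidden linear dependence among coordinates of $C_2$ that can be absorbed by left-multiplying $Q$ by an explicit block-supported permutation matrix $R$. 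Iterating this correction over all $j$ yields the desired pair-respecting $Q'$.

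Once $\tau'$ respects the pair structure, I read off $\sigma \in S_n$ from the induced permutation of pairs and $d_i \in \{\pm 1\}$ from the orientation within each pair, set $D = \mathrm{diag}(d_1, \ldots, d_n)$, take $P$ to be the permutation matrix of $\sigma$, and verify $C_1 = DPC_2$ by the same coordinate bookkeeping used in the forward direction.

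The main obstacle is the modification step above. Unlike the finite-field setting of \Cref{lem:10}, codes over $\ZZ_k$ may exhibit hidden linear dependencies among coordinates (especially when $\ZZ_k$ has zero divisors), so a general $\tau$ realizing $C_1^\pm = QC_2^\pm$ need not respect pairs outright. The technical novelty, as foreshadowed in the introduction to \Cref{sec:4}, is the explicit block-by-block construction of the correction matrix $R$ --- a device not needed in the original finite-field proof and the source of the extra permutation matrices mentioned in the paper's overview.
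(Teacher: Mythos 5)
Your proposal is correct and follows essentially the same route as the paper: the forward direction pushes $D$ and $P$ through the closure map as permutations of the $2n$ columns, and the reverse direction first normalizes the permutation to respect the canonical pairs $\{2i-1,2i\}$ and then reads off $(D,P)$ from the pair permutation and the orientation within each pair. The only difference is presentational: where you sketch an iterative ``correction matrix'' to absorb coincidences among columns, the paper defers exactly this pair-normalization step to the argument of \Cref{lem:10} from~\cite{DG2023}, so the two proofs rest on the same underlying idea.
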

On the other hand, we have to restrict $k$ to be odd to generalize \Cref{lem:11}.
\begin{lem}\label{lem:11a}
  When $k$ is odd, a linear code $C$ over $\ZZ_k$ with length $n$ is free and LCD if and only if $C^\pm$ is free and LCD.
\end{lem}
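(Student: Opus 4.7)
The plan is to exploit the characterization, stated in the paragraph following the definition of ``free code'', that a free code $C$ with generator matrix $G$ is LCD if and only if $G G^\top$ is invertible over $\ZZ_k$. The key additional input is that $2 \in \ZZ_k^\times$ when $k$ is odd.

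First, I would observe that the $\ZZ_k$-linear map
\[
\phi : \ZZ_k^n \to \ZZ_k^{2n}, \qquad (x_1,\ldots,x_n) \mapsto (x_1, -x_1, x_2, -x_2, \ldots, x_n, -x_n)
\]
is injective (just look at the odd-indexed coordinates). By construction, its restriction $\phi|_C : C \to C^\pm$ is a $\ZZ_k$-module isomorphism. Hence $C$ is a free $\ZZ_k$-module of rank $m$ if and only if $C^\pm$ is. Moreover, if $G$ is an $m \times n$ generator matrix of $C$ with linearly independent rows $r_1,\ldots,r_m$, then the $m \times 2n$ matrix $\tilde{G}$ whose $i$-th row is $\phi(r_i)$ is a generator matrix of $C^\pm$ whose rows remain linearly independent over $\ZZ_k$ (by injectivity of $\phi$).

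Second, I would compute $\tilde{G}\tilde{G}^\top$ directly: its $(i,j)$ entry equals
\[
\sum_{l=1}^{n}\bigl(G_{il}G_{jl}+(-G_{il})(-G_{jl})\bigr) \;=\; 2\,(GG^\top)_{ij},
\]
so $\tilde{G}\tilde{G}^\top = 2\, GG^\top$. Since $k$ is odd, $2$ is a unit in $\ZZ_k$, and therefore $GG^\top$ is invertible over $\ZZ_k$ if and only if $\tilde{G}\tilde{G}^\top$ is. Combined with the freeness equivalence from the first step and the ``free + LCD $\iff$ $GG^\top$ invertible'' characterization applied to both $G$ and $\tilde{G}$, this yields both directions of the lemma.

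The only subtlety worth checking is that the criterion ``$GG^\top$ invertible'' is independent of the choice of generator matrix of a free code: two such choices differ by a $\GL_m(\ZZ_k)$ change of basis $G' = UG$, giving $G'G'^\top = U(GG^\top)U^\top$, so invertibility is preserved. This is the main piece of bookkeeping one must not skip when moving from the finite-field setting of \Cref{lem:11} to the ring $\ZZ_k$; the computational core of the proof (the identity $\tilde{G}\tilde{G}^\top = 2GG^\top$ together with $2 \in \ZZ_k^\times$) is identical to the field case and presents no difficulty.
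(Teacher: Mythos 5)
Your proof is correct, and its computational core --- the identity $\tilde G\tilde G^\top = 2\,GG^\top$ together with $2\in\ZZ_k^\times$ --- is exactly the factor-of-two phenomenon the paper exploits. The packaging differs. The paper works directly with the definition of the hull: every element of $C^\pm$ has the form $\xx^\pm$, the identity $\xx^\pm\cdot\yy^\pm=2\,\xx\cdot\yy$ and the invertibility of $2$ show that the bijection $\xx\mapsto\xx^\pm$ carries $C\cap C^\perp$ onto $C^\pm\cap(C^\pm)^\perp$, and freeness is read off from the generator matrix $GT$ (using that $T$ admits a right inverse $T'$). You instead route both properties through the criterion ``$C$ is free and LCD if and only if $GG^\top$ is non-singular.'' Citing that criterion is legitimate, since the paper asserts it immediately after the definition of a free code; but be aware that over $\ZZ_k$ the implication ``free and LCD $\Rightarrow$ $GG^\top$ invertible'' is not a triviality --- it is essentially \Cref{prop:1} for chain rings extended to general $k$ via the Chinese remainder theorem, and the paper never proves it. The paper's own argument for \Cref{lem:11a} is self-contained and avoids this dependency, which is a small advantage; on the other hand, your version makes explicit the basis-independence of the criterion (conjugation by $U\in\GL_m(\ZZ_k)$) and the injectivity of $\phi$, points the paper glosses over. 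Both arguments are sound.
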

Furthermore, to generalize \Cref{thm:1} to $\ZZ_k$, the code must be free.
\begin{lem}\label{thm:1a}
Let $C_1$ and $C_2$ be free and LCD codes over $\ZZ_k$ with length $n$ and generator matrices $G_1$ and $G_2$, respectively.
Then, $C_1$ and $C_2$ are permutation code equivalent if and only if $G_1^\top(G_1G_1^\top)^{-1}G_1$ and $G_2^\top(G_2G_2^\top)^{-1}G_2$ are graph isomorphic.
\end{lem}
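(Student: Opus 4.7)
The plan is to follow the structure of the proof of Theorem~5 of~\cite{BAA2019}, but carry out each step over the ring $\ZZ_k$ rather than over a field, isolating where the hypothesis enters. Set $\Pi_i := G_i^\top(G_iG_i^\top)^{-1}G_i$. The free-and-LCD hypothesis on $C_i$ is precisely what guarantees that $G_iG_i^\top$ is invertible over $\ZZ_k$, as noted right after the definition of free code, so $\Pi_i$ is well-defined. A direct computation then gives $\Pi_i^\top = \Pi_i$, $\Pi_i^2 = \Pi_i$, and $G_i\Pi_i = G_i$, so $\Pi_i$ is a symmetric idempotent which fixes $C_i$ under right multiplication.

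The key preliminary step is to identify the $\ZZ_k$-row-space $R_i := \{v\Pi_i : v \in \ZZ_k^n\}$ with $C_i$. One inclusion is immediate, since every row of $\Pi_i$ is a $\ZZ_k$-linear combination of the rows of $G_i$ and hence lies in $C_i$. For the other, any $c = wG_i \in C_i$ satisfies $c\Pi_i = w(G_i\Pi_i) = wG_i = c$, so $c \in R_i$. Over a field this would be immediate from rank counting, but over $\ZZ_k$ this explicit double inclusion is needed; I expect this to be the only place where care distinct from the field case is really required.

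With $R_i = C_i$ in hand, the two directions are routine. For the ``only if'' direction, if $C_1 = C_2 P$ for a permutation matrix $P$, then since both $G_1$ and $G_2 P$ are bases for the free code $C_1$, there exists an invertible $M$ over $\ZZ_k$ with $G_1 = MG_2P$. Using $PP^\top = I$, substitution yields $G_1 G_1^\top = M(G_2 G_2^\top) M^\top$, and hence
\[
\Pi_1 = P^\top G_2^\top M^\top \cdot (M^\top)^{-1}(G_2 G_2^\top)^{-1} M^{-1} \cdot M G_2 P = P^\top \Pi_2 P,
\]
so $\Pi_1$ and $\Pi_2$, viewed as adjacency matrices of weighted graphs on $n$ vertices, are graph isomorphic. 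For the ``if'' direction, given $\Pi_1 = P^\top \Pi_2 P$, taking $\ZZ_k$-row-spaces gives $C_1 = R_1 = R_2 \cdot P = C_2 P$, so $C_1$ and $C_2$ are permutation code equivalent via $P$. The main conceptual obstacle is thus the row-space identification in the middle paragraph; apart from that, the only appeal to field structure in~\cite{BAA2019} is the invertibility of $GG^\top$, which our free-and-LCD hypothesis supplies directly.
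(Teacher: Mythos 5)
Your proof is correct, and its overall skeleton matches the paper's: both arguments reduce the lemma to the identification $C_i = \{\vv\,\Pi_i \mid \vv \in \ZZ_k^n\}$ and then transport the permutation through conjugation of the symmetric idempotents $\Pi_i$. Where you differ is in how that identification is established. The paper proves it by invoking the LCD decomposition $\ZZ_k^n = C \oplus C^\perp$ (via Theorem~3.1 of~\cite{TI2000}), writing down the explicit inverse $\begin{pmatrix} G^\top(GG^\top)^{-1} & H^\top(HH^\top)^{-1}\end{pmatrix}$ of the stacked generator matrix, and reading off $\sigma_C$ and $\sigma_{C^\perp}$ as complementary projections. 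Your double-inclusion argument ($R_i \subseteq C_i$ because the rows of $\Pi_i$ are $\ZZ_k$-combinations of the rows of $G_i$; $C_i \subseteq R_i$ because $G_i\Pi_i = G_i$) is more elementary and uses only the invertibility of $G_iG_i^\top$, bypassing the direct-sum machinery entirely. You are also more careful than the paper in the forward direction: the paper writes $G_1 = G_2P$ directly, implicitly fixing compatible generator matrices, whereas your insertion of the change-of-basis matrix $M$ with $G_1 = MG_2P$ (legitimate since two bases of the same free module over a commutative ring differ by an invertible matrix) covers the general case where the given generator matrices are arbitrary. Both refinements are improvements rather than gaps; the two proofs buy the same conclusion under the same hypotheses.
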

Note that $k$ can be any integer in \Cref{thm:1a}.

Using these lemmas, we can prove \Cref{thm:maina}.

\begin{pproof}{\Cref{thm:maina}}
\rev{From \Cref{lem:9a}, the LIP for lattices obtained by Construction A from any LCD code over $\ZZ_k$ with length $n$ can be reduced to the $\ZZ$LIP and SPEP for LCD codes over $\ZZ_k$ with length $n$. 
  From \Cref{lem:10a}, this SPEP can be reduced to the PEP for codes over $\ZZ_k$ with length $2n$. 
  Additionally, from \Cref{lem:11a}, if $k$ is odd, the code obtained by \Cref{lem:10a} preserves the properties of being free and LCD. 
  Finally, from \Cref{thm:1a}, the PEP for free and LCD codes can be reduced to the graph isomorphism problem, yielding \Cref{thm:maina}.\com{A}{9}
}\end{pproof}

\subsubsection{Proofs of \Cref{lem:9a} - \ref{thm:1a}}
Next, we prove \Cref{lem:9a} - \ref{thm:1a} using a similar discussion to that of Ducas-Gibbons.
Firstly, we generalize \Cref{lem:4} to a linear code over $\ZZ_k$.

\begin{lem}\label{lem:4a}
  Let $C$ be a linear code over $\ZZ_{k}$ with length $n$, and let $L=C+k\ZZ^n$ be the lattice obtained from the Construction A of $C$.
  Then,
  \[
    H_{k}(L)=C\cap C^\perp +k\ZZ^n.
  \]
\end{lem}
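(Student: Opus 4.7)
The plan is to unfold $H_k(L) = L \cap kL^*$ by first computing the dual lattice $L^*$ explicitly, then multiplying by $k$, then intersecting with $L$ and checking both inclusions against $(C \cap C^\perp) + k\ZZ^n$ directly. Throughout, I will use the convention built into the paper that ``$C + k\ZZ^n$'' means $\{\xx \in \ZZ^n \mid \iota_k(\xx) \in C\}$, so that sums like $C \cap C^\perp + k\ZZ^n$ are interpreted as subsets of $\ZZ^n$ via the standard lift of $\ZZ_k$-elements to representatives in $\ZZ$.

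First I would determine $L^*$. Since $k\ZZ^n \subseteq L$, any $\xx \in L^*$ must satisfy $k\xx \in \ZZ^n$, so $L^* \subseteq \tfrac{1}{k}\ZZ^n$. Writing $\xx = \tfrac{1}{k}\zz$ with $\zz \in \ZZ^n$, the condition $\xx \cdot (\cc + k\mm) \in \ZZ$ for all $\cc \in C$ (lifted) and all $\mm \in \ZZ^n$ reduces to $\zz \cdot \cc \in k\ZZ$, which is exactly the statement $\iota_k(\zz) \in C^\perp$. Hence $L^* = \tfrac{1}{k}\bigl(C^\perp + k\ZZ^n\bigr)$, and therefore
\[
  kL^* \;=\; C^\perp + k\ZZ^n.
\]

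Next I would compute the intersection. By the formula above,
\[
  H_k(L) \;=\; (C + k\ZZ^n) \cap (C^\perp + k\ZZ^n).
\]
The inclusion $(C \cap C^\perp) + k\ZZ^n \subseteq H_k(L)$ is immediate from $C \cap C^\perp \subseteq C$ and $C \cap C^\perp \subseteq C^\perp$. For the reverse inclusion, take $\xx \in H_k(L)$; then $\iota_k(\xx) \in C$ and $\iota_k(\xx) \in C^\perp$, so $\iota_k(\xx) \in C \cap C^\perp$, which by definition of the Construction A lift means $\xx \in (C \cap C^\perp) + k\ZZ^n$. Combining both inclusions gives the lemma.

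The only genuinely nontrivial step is the computation of $L^*$; there is no essential obstacle, but one must be careful that the argument for $L^* \subseteq \tfrac{1}{k}\ZZ^n$ uses only the sublattice $k\ZZ^n$, while the refinement to $\iota_k(\zz) \in C^\perp$ uses the full lattice $L$. Everything else is a direct unpacking of definitions, and nothing in the argument required $k$ to be prime, so it transfers verbatim from the prime case of Lemma~\ref{lem:4} to the ring $\ZZ_k$ setting needed here.
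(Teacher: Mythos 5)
Your proposal is correct and follows essentially the same route as the paper: both establish the key identity $kL^* = C^\perp + k\ZZ^n$ and then reduce the intersection $(C + k\ZZ^n) \cap (C^\perp + k\ZZ^n)$ to $(C \cap C^\perp) + k\ZZ^n$ via the Construction A characterization. Your explicit derivation of $L^* \subseteq \tfrac{1}{k}\ZZ^n$ before refining to $C^\perp$ is a slightly more detailed presentation of the same inclusion argument the paper gives.
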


\begin{proof}
  First, we show 
  \begin{equation}\label{eq:1}
    C^\perp+k\ZZ^n=k(C+k\ZZ^n)^*.
  \end{equation}
  For all elements $x \in C^\perp + k\ZZ^n$ and $y \in C + k\ZZ^n$, the inner product $x \cdot y$ is in $k\ZZ$.
  Therefore, $C^\perp + k\ZZ^n \subset k(C + k\ZZ^n)^*$.
  Similarly, for all $z \in k(C + k\ZZ^n)^*$ and $y \in C + k\ZZ^n$, from the definition of the dual of a lattice, 
  $z \cdot y \in k\ZZ$.
  In particular, $\iota_k(z) \cdot \iota_k(y) \in k\ZZ$ and $\iota_k(z) \in C^\perp$ imply $\iota_k(k(C + k\ZZ^n)^*) \in C^\perp$.
  Therefore, $C^\perp + k\ZZ^n \supset k(C + k\ZZ^n)^*$ and we obtain \Cref{eq:1}.
  Hence,
  \begin{align*}
    H_{k}(L) &= (C + k\ZZ^n) \cap k(C + k\ZZ^n)^* \\
    &= (C + k\ZZ^n) \cap (C^\perp + k\ZZ^n) \\
    &= (C \cap C^\perp) + k\ZZ^n.
  \end{align*}
\end{proof}

We can prove \Cref{lem:9a} using \Cref{lem:4a}.

\begin{pproof}{\Cref{lem:9a}}
  Firstly, we calculate $k$ from the given lattices, $L_1$ and $L_2$.
  Let $G_i$ be the generator matrix of $L_i$ with size $m \times n$, then the determinant of $G_i G_i^\top$ is $k^{n-m}$.
  Therefore, we obtain $k$ by computing the determinant of $G G^\top$.
  Next, we consider the $k$-hull of $L_i$, denoted as $H_k(L_i)$.
  From \Cref{lem:4a} and the assumption that $C$ is LCD, we have
  \begin{align*}
    H_k(L_i) &= O_i(H_k(C + k \ZZ^n)) \\
             &= O_i((C \cap C^\perp) + k \ZZ^n) \\
             &= k O_i \ZZ^n.
  \end{align*}
  Hence, if we solve $\ZZ$LIP for lattices $H_k(L_i)$, then we obtain the orthonormal matrix $\hat{O}_i$, which transforms $H_k(L_i)$ to $k \ZZ^n$.
  After this point, the reduction is exactly the same as the proof of \rev{\Cref{lem:9}}.
  Since $\hat{O}_i$ is a product of $O_i^{-1}$ and an automorphism of $\ZZ^n$, there exists a diagonal matrix $D_i$, whose entries are all $\pm 1$, and a permutation matrix $P_i$ such that $\hat{O}_i = D_i P_i O_i^{-1}$.
  Therefore, by applying $\hat{O}_i$ to $L_i$, we obtain $\hat{O}_i L_i = D_i P_i (C + k \ZZ^n) = D_i P_i C + k \ZZ^n$.
  Now, the problem of determining $DP$ such that $D_1 P_1 C = DP D_2 P_2 C$ is an SPEP.
  As a whole, since $\hat{O}_i$ is obtained by $\ZZ$LIP and $DP$ is computed by SPEP, the proof is complete.
\end{pproof}

\Cref{lem:10a} and \Cref{lem:11a} are also obtained by the same arguments which were given in Ducas-Gibbons~\cite{DG2023}.
While the proof of \Cref{lem:10a} is essentially the same as that of Ducas and Gibbons, we provide a proof using permutation matrices in preparation for the proof of \Cref{lem:10b}.

\begin{pproof}{\Cref{lem:10a}}
  Let $i\in\{1,2\}$, let $G_i$ be the generator matrix of $C_i$, and let $T$ be a matrix as follows:
  {\small
  \[
    T = \begin{pmatrix}
      1 & -1 & &&&&& \\
      && 1 & -1 & &&& \\
      &&&& \ddots &&& \\
      &&&&&& 1 & -1
    \end{pmatrix}.
  \]
  }
  Then, the generator matrix of $C_i^\pm$, denoted as $G_i^\pm$, satisfies $G_i^\pm = G_i T$.
  Now, if $C_1$ and $C_2$ are signed permutation code equivalent, then there exist a permutation matrix $P$ and a diagonal matrix whose entries are only $\pm 1$ such that $G_1 = G_2 P D$.
  Since the matrix $D T$ is a matrix whose rows are either the rows of $T$ or the rows of $T$ multiplied by $-1$, there exists a permutation matrix $P'$ such that $D T = T P'$.
  \rev{By} a similar discussion, $P T$ is a matrix whose rows are equal to some rows of $T$, which means there exists a permutation matrix $P''$ such that $P T = T P''$.
  In fact, the swapping of the $i$-th row and $j$-th row can be represented as the swapping of the $(2i-1)$-th column and $(2j-1)$-th column, and swapping the $2i$-th and $2j$-th columns.
  Hence, there exist $P'$ and $P''$ such that
  \[
    G_1 T = G_2 P D T = G_2 T P'' P',
  \]
  which implies that $C_1^\pm$ are permutation code equivalent to $C_2^\pm$.

  On the contrary, we assume that $C_1^\pm$ are permutation code equivalent to $C_2^\pm$.
  From the definition of $C_i^\pm$, the generator matrices are $G_iT$, and \rev{from their permutation code equivalence}, there exists a permutation matrix $P$ such that $G_1T = G_2TP$.
  Now, $TP$ is a matrix obtained by permuting the column vectors of $T$.
  Therefore, let $G_1T = (u_1, \ldots, u_{2n})$ and $G_2T = (v_1, \ldots, v_{2n})$, then there exists a permutation $\sigma$ corresponding to $P$ such that $u_i = v_{\sigma(i)}$.
  Then, from the same discussion of the proof of \rev{\Cref{lem:10}}, we can assume that there exist $\tau_i$ for any $i\in\{1,\ldots,n\}$ such that 
    \[
      \{\tau_i(2i-1),\tau_i(2i)\}=\{2k-1,2k\}
    \]
    for some $k\in\{1,\ldots,n\}$ and $\sigma=\prod_{i=1}^n\tau_i.$\com{B}{5}
  
  Furthermore, if $\tau_i(2i-1) = 2j-1$, then this permutation can be represented as the swap of the $i$-th row and the $j$-th row in $T$, and if $\tau_i(2i-1) = 2j$, then this permutation can also be represented as the swap of the $i$-th row and the $j$-th row in $T$, followed by multiplying $-1$ to the $i$-th and $j$-th rows. 
  This implies that there exists a permutation matrix $P'$ and a diagonal matrix $D$ with all entries being $\pm 1$ such that 
  $TP = P'DT$, where $P$ corresponds to the permutation $\sigma$. Therefore, $G_1T = G_2TP = G_2P'DT$. 
  
  Additionally, let $T'$ be the matrix as follows:
  \[
      T' = \begin{pmatrix}
        1 & & & \\
        0 & & & \\
        & 1 & & \\
        & 0 & & \\
        && \ddots & \\
        && & 1 \\
        && & 0
      \end{pmatrix}.
  \]
  Then, since $TT'$ is the identity matrix, $G_1 = G_2P'D$, and we obtain that $C_1$ and $C_2$ are signed permutation code equivalent.  \end{pproof}

\begin{pproof}{\Cref{lem:11a}}
  The freeness of $C^\pm$ is obtained from the fact that the generator matrix of this code is $GT$.
  For any vector $\xx = (x_1, x_2, \ldots, x_n)$ and $\yy = (y_1, y_2, \ldots, y_n)$ in $C$, both 
  \begin{align*}
    \xx^\pm &= (x_1, -x_1, \ldots, x_n, -x_n),\\
    \yy^\pm &= (y_1, -y_1, \ldots, y_n, -y_n)
  \end{align*}
  are elements of $C^\pm$, and 
  \[
    \xx^\pm \cdot \yy^\pm = 2 \xx \cdot \yy.
  \]
  Here, since $k$ is an odd \rev{integer}, $2$ is a unit over $\ZZ_k$.
  Hence, $\xx \cdot \yy = 0$ is equivalent to $\xx^\pm \cdot \yy^\pm = 0$.
  From the above, the LCD property of $C^\pm$ is equivalent to that of $C$.
\end{pproof}

\begin{pproof}{\Cref{thm:1a}}
  Let $C$ be a free and LCD code over $\ZZ_{k}$ with length $n$, and let $G$ be a generator matrix of $C$ with size $m \times n$. We show that $G^\top (GG^\top)^{-1} G$ is a projection of $\ZZ_{k}^n$ onto $C$. 
  \rev{Since $C$ is LCD , $(C \cap C^\perp)^\perp = \ZZ_{k}^n$.
  Additionally, since $C$ is free, the size of the generator matrix of $C^\perp$, denoted as $H$, is $(n - m) \times n$. \com{B}{3}}
  From Theorem 3.1~\cite{TI2000}, the generator matrix of $\ZZ_{k}^n = (C \cap C^\perp)^\perp = C \oplus C^\perp$ is given by
  \[
    \begin{pmatrix}
      G \\
      H
    \end{pmatrix}.
  \]
  Since $GG^\top$ is a non-singular matrix, we can define the following matrix:
  \[
    \begin{pmatrix}
      G^\top (GG^\top)^{-1} & H^\top (HH^\top)^{-1}
    \end{pmatrix},
  \]
  which is the inverse matrix of $\begin{pmatrix} G \\ H \end{pmatrix}$. 
  Let $\sigma_C$ and $\sigma_{C^\perp}$ be the projections onto $C$ and $C^\perp$, respectively, which satisfy
  \[
    \sigma_C \circ \sigma_{C^\perp} = \sigma_{C^\perp} \circ \sigma_C = 0, \quad \sigma_C(\vv) + \sigma_{C^\perp}(\vv) = \vv.
  \]
  Then, there exist $\xx_C \in \ZZ_k^m$ and $\xx_{C^\perp} \in \ZZ_k^{n-m}$ such that $\sigma_C(\vv) = \xx_C G$ and $\sigma_{C^\perp}(\vv) = \xx_{C^\perp} H$. This implies
  {\small 
    \begin{align*}
      \vv &= \sigma_C(\vv) + \sigma_{C^\perp}(\vv) \\
      &= \xx_C G + \xx_{C^\perp} H \\
      &= \xx \begin{pmatrix}
        G \\
        H
      \end{pmatrix}
    \end{align*}
  }
  Therefore,
  \[
  \xx = \vv \begin{pmatrix}
      G^\top (GG^\top)^{-1} & H^\top (HH^\top)^{-1}
    \end{pmatrix}
  \]
  and
  \begin{align*}
    \sigma_C(\vv) &= \vv G^\top (GG^\top)^{-1} G,\\
    \sigma_{C^\perp}(\vv) &= \vv H^\top (HH^\top)^{-1} H.
  \end{align*}
  From the above, we have
  \[
      C = \text{Im}(\sigma_C) = \left\{ \vv G^\top (GG^\top)^{-1} G \mid \vv \in \ZZ_k^n \right\}.
  \]
  If $C_1$ and $C_2$, which are free and LCD over $\ZZ_k$, are permutation code equivalent, then there exists $P$ such that $G_1 = G_2 P$.
  Then,
  \begin{align*}
    G_1^\top (G_1 G_1^\top)^{-1} G_1 &= (G_2 P)^\top (G_2 P (G_2 P)^\top)^{-1} G_2 P \\
    &= P^\top G_2^\top (G_2 G_2^\top)^{-1} G_2 P,
  \end{align*}
  and we obtain that $G_1^\top (G_1 G_1^\top)^{-1} G_1$ and $G_2^\top (G_2 G_2^\top)^{-1} G_2$ are graph isomorphisms.
  On the contrary, assume there exists $P$ such that 
  \[
    G_1^\top (G_1 G_1^\top)^{-1} G_1 = P^\top G_2^\top (G_2 G_2^\top)^{-1} G_2 P.
  \]    
  Then, 
  \begin{align*}
    C_1 &= \{\vv G_1^\top (G_1 G_1^\top)^{-1} G_1 \mid \vv \in \ZZ_k^n\} \\
    &= \{\vv P^\top G_2^\top (G_2 G_2^\top)^{-1} G_2 P \mid \vv \in \ZZ_k^n\} \\
    &= \{\vv G_2^\top (G_2 G_2^\top)^{-1} G_2 P \mid \vv \in \ZZ_k^n\} \\
    &= C_2 P
  \end{align*}
  and $C_1$ is permutation code equivalent to $C_2$.
\end{pproof}

Note that the most important part of the proof of \Cref{thm:1a} is that $GG^\top$ is non-singular.
Since codes over any field are always free, we can prove \Cref{thm:1} for codes assuming only that they are LCD.

\subsubsection{The specific case when $k$ is an odd prime power}

In fact, when $k$ is an odd prime power, we can eliminate the property of freeness of the code.
\rev{By the following argument,} if a ring $R$ has the property called FCR (finite chain ring), then the LCD code over $R$ is always free.

\rev{
  \begin{df}[Finite chain ring (FCR)]
    A finite commutative ring with unity is called a finite chain ring (FCR) if, for any two ideals, one is contained in the other.\com{B}{8}
  \end{df}
}

\begin{prop}[\cite{DG2023}]\label{prop:1}
  Let $R$ be an FCR, let $C$ be a code over $R$ with length $n$, and let $G$ be a generator matrix of $C$.
  Then, the following are equivalent:
  \begin{itemize}
    \item[(i)] $C$ is LCD.
    \item[(ii)] $GG^\top$ is non-singular.
  \end{itemize}
\end{prop}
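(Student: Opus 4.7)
The plan is to prove each implication separately. The easy direction (ii) $\Rightarrow$ (i) runs as in the finite-field case: for $c \in C \cap C^\perp$, write $c = xG$ (from $c \in C$) and use $cG^\top = 0$ (from $c \in C^\perp$) to get $xGG^\top = 0$; non-singularity then forces $x = 0$ and $c = 0$.

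For (i) $\Rightarrow$ (ii), the first step is to show that LCD forces $C$ to be free. Let $\pi$ generate the maximal ideal of $R$ with $\pi^t = 0$, and set $\kappa = R/(\pi)$. By the structure theory of codes over FCRs, $C$ admits a standard generator matrix with a ``type'' $(k_0, k_1, \ldots, k_{t-1})$ and a module decomposition $C \cong \bigoplus_{i=0}^{t-1} (R/(\pi^{t-i}))^{k_i}$; by the classical duality of types, $C^\perp$ has type $(n - \sum_i k_i, k_{t-1}, \ldots, k_1)$. The socle $\mathrm{soc}(C) \coloneq \{c \in C : \pi c = 0\}$ is then a $\kappa$-vector space of dimension $\sum_i k_i$, and similarly $\dim_\kappa \mathrm{soc}(C^\perp) = n - k_0$. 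Both socles sit inside $\pi^{t-1} R^n$, a $\kappa$-vector space of dimension $n$, and the LCD hypothesis gives $\mathrm{soc}(C) \cap \mathrm{soc}(C^\perp) = \mathrm{soc}(C \cap C^\perp) = 0$. Summing dimensions yields $\sum_i k_i + (n - k_0) \leq n$, forcing $\sum_{i \geq 1} k_i = 0$; hence $C$ is free.

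With $C$ free, the rows of $G$ form an $R$-basis, so $xG = 0 \Rightarrow x = 0$. If $xGG^\top = 0$, then $c \coloneq xG \in C$ satisfies $cG^\top = 0$, so $c \in C^\perp$, and LCD gives $c = 0$, whence $x = 0$. Thus $GG^\top$ has trivial left kernel; since $R$ is finite, an injective $R$-linear endomorphism of $R^m$ is bijective, so $GG^\top$ is invertible.

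The main obstacle is the step LCD $\Rightarrow$ free, which requires invoking both the standard-form decomposition and the duality-of-types formula for codes over FCRs. The remaining linear-algebra arguments are direct generalizations of the finite-field case.
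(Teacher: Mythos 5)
Your argument is correct, but it cannot be compared line-by-line with the paper's, because the paper gives no proof of this proposition at all: it is stated as a citation, with the remark that Liu--Liu~\cite{LL2015} proved it for specific LCD codes and Dur\u{g}un~\cite{D2020} for arbitrary LCD codes over finite chain rings. What you have written is therefore a self-contained substitute for that citation. The direction (ii) $\Rightarrow$ (i) and the final step of (i) $\Rightarrow$ (ii) are the routine generalizations of the field case, and your use of finiteness to upgrade injectivity of $x \mapsto xGG^\top$ to invertibility is the right way to handle the ring setting. The real content is your reduction of ``LCD $\Rightarrow$ free'' to a dimension count over the residue field: both socles live in $\mathrm{soc}(R^n) \cong \kappa^n$, they meet trivially by the LCD hypothesis, and the duality-of-types formula forces $k_1 = \cdots = k_{t-1} = 0$. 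This is a clean argument, and it is consistent with the paper's own observation (stated right after the proposition) that over an FCR non-singularity of $GG^\top$ is equivalent to $C$ being free \emph{and} LCD. The only external inputs you rely on --- the standard form of a generator matrix over an FCR and the type of the dual code --- are classical structural results (Norton--S\u{a}l\u{a}gean, Honold--Landjev; the latter is already cited in the paper as~\cite{TI2000}), so invoking them without proof is reasonable, though you should cite them explicitly if this were to replace the paper's citation.
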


\rev{Note that $GG^\top$ is non-singular if and only if $C$ is free and LCD code.}
Liu-Liu~\cite{LL2015} shows that \Cref{prop:1} holds for specific LCD codes, and Durǧun~\cite{D2020} shows \Cref{prop:1} holds for any LCD code.
It is also known that $\ZZ_k$ is an FCR if $k$ is a prime power,
so in this case, we can obtain \Cref{thm:main}.

\begin{cor}\label{thm:main}
  Let $q$ be an odd prime power, and let $C$ be an LCD code over $\mathbb{Z}_{q}$ with length $n$. 
  For any $i \in \{1, 2\}$, let $O_i$ be an orthonormal matrix of size $n$, and let $L_i = O_i(C + p^a \mathbb{Z}^n)$. 
  Then, LIP for $L_1$ and $L_2$ can be reduced to $\mathbb{Z}$LIP and GI with $2n$ vertices.
\end{cor}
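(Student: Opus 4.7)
The plan is to derive \Cref{thm:main} as an immediate consequence of the odd-$k$ theorem \Cref{thm:maina}, using \Cref{prop:1} to remove the freeness hypothesis. The corollary assumes only that $C$ is LCD, whereas \Cref{thm:maina} requires $C$ to be both free and LCD; the job is therefore to show that over $\ZZ_q$ with $q = p^a$ an odd prime power, \emph{free} comes for free.

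First, I would verify that $\ZZ_q$ is a finite chain ring when $q = p^a$. The ideals of $\ZZ_{p^a}$ are precisely the principal ideals $(p^i)$ for $i \in \{0,1,\ldots,a\}$, and these form a totally ordered chain
\[
(0) = (p^a) \subsetneq (p^{a-1}) \subsetneq \cdots \subsetneq (p) \subsetneq (1)
\]
under inclusion. Hence $\ZZ_q$ is an FCR in the sense of the definition preceding \Cref{prop:1}.

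Second, since $\ZZ_q$ is an FCR and $C$ is LCD, \Cref{prop:1} gives that the generator matrix $G$ of $C$ satisfies: $GG^\top$ is non-singular. As already noted in the text immediately following the definition of free codes, $GG^\top$ non-singular is equivalent to $C$ being both free and LCD. So $C$ is automatically a free LCD code over $\ZZ_q$.

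Third, because $q$ is an odd prime power, in particular $q$ is odd, so \Cref{thm:maina} applies verbatim to $C$: the LIP for $L_1 = O_1(C + q\ZZ^n)$ and $L_2 = O_2(C + q\ZZ^n)$ reduces to $\ZZ$LIP together with GI on graphs of vertex size $2n$, which is exactly the assertion of the corollary. I do not anticipate any real obstacle here, since the proof is a two-line composition: the FCR structure of $\ZZ_{p^a}$ upgrades LCD to free-and-LCD via \Cref{prop:1}, and then \Cref{thm:maina} does all the remaining work.
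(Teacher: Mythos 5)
Your proposal is correct and follows exactly the route the paper intends: the text preceding \Cref{thm:main} derives it by noting that $\ZZ_{p^a}$ is an FCR, invoking \Cref{prop:1} to upgrade LCD to free-and-LCD (via non-singularity of $GG^\top$), and then applying \Cref{thm:maina}. Your write-up merely makes explicit the chain of ideals $(0) \subsetneq (p^{a-1}) \subsetneq \cdots \subsetneq (1)$ that the paper leaves as a known fact.
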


\subsection{The case $k$ is even and not divisible by $4$}\label{ssec:4-2}

Additionally, we consider the case when $k$ is even. In this case, the obstacle to generalization is \Cref{lem:11a}. To resolve this, we modify the signed closure code.
\begin{df}[Extended signed closure]\label{df:esc}
  Let $k=2m$ be an even number, where $m$ is odd, and let $T_m$ be a matrix of size $n \times 3n$ as follows:
  \[
    T_m = \begin{pmatrix}
      T & mI_n
    \end{pmatrix},
  \]
  where $T$ is defined in \Cref{lem:10a}.
  For any code $C$ over $\ZZ_k$ with length $n$, the extended signed closure code of $C$, denoted as $C^{e}$, is a code over $\ZZ_k$ with length $3n$, defined as follows:
  \[
    C^e \coloneq \{ cT_m \mid c \in C \}.
  \]
\end{df}
Extended signed closure code is a code whose first $2n$ rows are equivalent to a signed closure code, and after the $(2n+1)$-th row, all entries are $m$ or $0$. 
When $k$ is even and not divisible by $4$, the SPEP for free and LCD codes over $\ZZ_{k}$ with length $n$ can be reduced to the PEP for free and LCD codes over $\ZZ_{k}$ with length $3n$ using the extended signed closure code. 
Firstly, we show \Cref{lem:10b}, which corresponds to the generalization of \Cref{lem:10a}.

\begin{lem}\label{lem:10b}
  Let $m$ be an odd \rev{integer} and $k=2m$.
  The signed permutation code equivalence of $C_1$ and $C_2$, which are over $\ZZ_k$, is equivalent to the permutation code equivalence of $C_1^e$ and $C_2^e$.
\end{lem}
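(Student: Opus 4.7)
The plan is to follow the structure of the proof of \Cref{lem:10a}, handling the extra block $mI_n$ in $T_m = (T \mid mI_n)$ at each step. The forward direction is a direct computation; the main technical work lies in the reverse direction.

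\emph{Forward direction.} Suppose $G_1 = G_2 PD$ for a permutation matrix $P$ and a diagonal $\pm 1$ matrix $D$. Then
\[
  G_1 T_m = (G_2 PDT \mid m G_2 PD).
\]
For the first block, the identities $DT = TP'$ and $PT = TP''$ from the proof of \Cref{lem:10a} give $G_2 PDT = G_2 T(P''P')$. For the second block, since $-m \equiv m \pmod{2m}$ and $D$ has $\pm 1$ entries, $mD = mI_n$, so $m G_2 PD = m G_2 P$. Hence
\[
  G_1 T_m = G_2 T_m Q, \qquad Q = \begin{pmatrix} P''P' & 0 \\ 0 & P \end{pmatrix},
\]
a block-diagonal $3n \times 3n$ permutation, proving permutation equivalence of $C_1^e$ and $C_2^e$.

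\emph{Reverse direction.} Suppose $G_1 T_m = G_2 T_m Q$ for some permutation $Q$ with induced permutation $\sigma$ on $\{1, \ldots, 3n\}$. Let $\gamma^{(i)}_l$ denote the $l$-th column of $G_i$; then the columns of $G_i T_m$ are $c^{(i)}_{2l-1} = \gamma^{(i)}_l$, $c^{(i)}_{2l} = -\gamma^{(i)}_l$, and $c^{(i)}_{2n+l} = m\gamma^{(i)}_l$. The goal is to extract a signed permutation equivalence $G_1 = G_2 P' D'$. The plan is to first normalize $Q$ to be block-diagonal with respect to the partition $\{1, \ldots, 2n\} \sqcup \{2n+1, \ldots, 3n\}$ and then apply \Cref{lem:10a} to the first block.

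Call $\gamma^{(i)}_l$ \emph{degenerate} if $\gamma^{(i)}_l \in m\ZZ_{2m}^n$; using $-m \equiv m$ and $m^2 \equiv m \pmod{2m}$ (both relying on $m$ odd), a degenerate column satisfies $\gamma^{(i)}_l = -\gamma^{(i)}_l = m\gamma^{(i)}_l$, so the three columns $c^{(i)}_{2l-1}, c^{(i)}_{2l}, c^{(i)}_{2n+l}$ in a degenerate triple coincide. The normalization rests on three observations: (i) if $\gamma^{(1)}_i$ is non-degenerate, then $\sigma(2i-1), \sigma(2i) \in \{1, \ldots, 2n\}$, since otherwise $\gamma^{(1)}_i = m\delta_l \in m\ZZ_{2m}^n$ would contradict non-degeneracy; (ii) reducing $G_1 T_m = G_2 T_m Q$ modulo $2$ (where $mI_n \equiv I_n$ since $m$ is odd) forces $\{\gamma^{(1)}_l \bmod 2\}_l = \{\delta_l \bmod 2\}_l$ as multisets, whence the column multisets of the two $mI_n$-blocks coincide; and (iii) combining (ii) with the overall column-multiset equality of $G_1 T_m$ and $G_2 T_m$ yields equality of the degenerate column multisets $\{\gamma^{(1)}_l : \gamma^{(1)}_l \text{ degenerate}\}$ and $\{\delta_l : \delta_l \text{ degenerate}\}$.

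The main obstacle is to use these observations to replace $Q$ by a block-diagonal $Q'$ still satisfying $G_1 T_m = G_2 T_m Q'$. My plan is to leave $\sigma$ unchanged on the non-degenerate positions (where (i) already keeps it in the first block) and rearrange it on the degenerate positions in two steps: a bijection between the degenerate positions of $\{1, \ldots, 2n\}$ in $G_1 T_m$ and those in $G_2 T_m$ (possible by (iii)), and a bijection on $\{2n+1, \ldots, 3n\}$ between $G_1 T_m$ and $G_2 T_m$ (possible by (ii)). Because the three entries within any degenerate triple of $G_i T_m$ are equal, such a rearrangement preserves the matrix identity. Restricting the resulting $Q'$ to its $2n \times 2n$ top-left block gives $G_1 T = G_2 T Q'_1$, i.e., permutation equivalence of $C_1^\pm$ and $C_2^\pm$, after which \Cref{lem:10a} delivers the desired signed permutation equivalence of $C_1$ and $C_2$.
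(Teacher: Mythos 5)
Your proposal is correct and follows essentially the same strategy as the paper's proof: the forward direction is the same computation with $mD = mI_n$, and the reverse direction likewise hinges on block-diagonalizing the permutation by showing that the multisets of degenerate (i.e., $\{0,m\}^n$-valued) columns in the first $2n$ positions coincide, after which \Cref{lem:10a} finishes the argument. The only difference is cosmetic: you establish the key multiset equality via reduction modulo $2$ and the well-definedness of $x \bmod 2 \mapsto mx \bmod 2m$, whereas the paper counts each degenerate vector appearing exactly three times among its triple of columns using $m=-m=m^2$; both are valid elementary counting arguments.
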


\begin{proof}
  In this proof, if we denote the column vectors $\{\vv_1, \ldots, \vv_r\}$, we consider this as a multiset and
  \[
    \{\uu_1, \ldots, \uu_r\} = \{\vv_1, \ldots, \vv_r\}
  \]
  denote this as an equivalence of multisets.
  Let $n$ be the length of $C_1$ and $C_2$.
  For any $i \in \{1, 2\}$, let $G_i$ be a generator matrix of $C_i$, and let $G_i^e$ be a generator matrix of $C_i^e$.
  Then, from the definition, $G_i^e = G_i T_m$.
  Firstly, we assume that $C_1$ and $C_2$ are signed permutation code equivalent, i.e., there exists a permutation matrix $P$ and a diagonal matrix $D$ such that $G_1 = G_2 P D$.
  Thus, $G_1^e = G_1 T_m = G_2 P D T_m$.
  Since $m = -m$,
  \[
    D T_m = \begin{pmatrix}
      D T & D m I_n
    \end{pmatrix} = \begin{pmatrix}
      D T & m I_n
    \end{pmatrix}.
  \]
  From \Cref{lem:10a}, since there exists a permutation matrix $P'$ such that $D T = T P'$, there also exists a permutation matrix $P'_{3n}$ such that 
  $P'_{3n} = \begin{pmatrix}
    P' & 0 \\
    0 & I_n
  \end{pmatrix}$ and $D T_m = T_m P'$.
  From the proof of \Cref{lem:10a}, because there exists a permutation matrix $P''$ such that $P T = T P''$,
  $P T_m = \begin{pmatrix}
    P T & P m
  \end{pmatrix} = \begin{pmatrix}
    T P'' & m I_n P
  \end{pmatrix} = T_m \begin{pmatrix}
    P'' & 0 \\
    0 & P
  \end{pmatrix}.$
  Therefore,
  \[
    G_1^e = G_2 PDT_m = G_2 T_m P_{3n}' P'' = G_2^e P''',
  \]
  this implies that $C_1^e$ is permutation code equivalent to $C_2^e$.
  On the contrary, we assume there exists a permutation matrix such that $G_1 T_m = G_2 T_m P$.
  Let
  \[
    G_1 T_m = (\uu_1, \ldots, \uu_{3n}), \quad G_2 T_m = (\vv_1, \ldots, \vv_{3n}),
  \]
  and let $\sigma$ be a permutation corresponding to $P$, such that $\uu_i = \vv_{\sigma(i)}$, and define $U_m$ and $V_m$ as follows:
  \begin{align*}
    U&\coloneq \{\uu_1,\cdots,\uu_{3n}\},\\
    V&\coloneq \{\vv_1,\cdots,\vv_{3n}\},\\
    U_m &\coloneq \{\uu_i \in U \mid \uu_i = m \uu_i\}, \\
    V_m &\coloneq \{\vv_i \in V \mid \vv_i = m \vv_i\}.
  \end{align*}
  Note that all entries of the column vector $\uu_i$ are $0$ or $m$ if and only if $\uu_i \in U_m$, and $V_m$ is the same. 
  Especially, if $i \geq 2n+1$, then $\uu_i \in U_m$ and $\vv_i\in V_m$.

  Since $U=V$, $V_m = U_m$, and $\sigma$ can be represented as the product of permutations $\sigma_m$, which maps $V_m$ to $U_m$, and a permutation $\tau$, which maps $V \setminus V_m$ to $U \setminus U_m$.
  From now on, we show the existence of $P'$ such that $G_1 T_m = G_2 T_m P'$ and $P' = \begin{pmatrix} P_{2n} & 0 \\ 0 & P_n \end{pmatrix}$, where $P_{2n}$ and $P_n$ are permutation matrices of size $2n$ and $n$, respectively.
  Note that $P'$ represents the product of permutations $\sigma_1$ and $\sigma_2$, where $\sigma_1$ is a permutation of the first $2n$ rows and $\sigma_2$ is a permutation of the last $n$ rows.
  Define
  \begin{align*}
      U_{\leq 2n} &\coloneq \{\uu_i \in U_m \mid i \leq 2n\}, \\
      V_{\leq 2n} &\coloneq \{\vv_i \in V_m \mid i \leq 2n\}.
  \end{align*}
  Then, if $U_{\leq 2n}=V_{\leq 2n}$ then such a $P'$ exists.
  Define $V_m^c \coloneq V\setminus V_m$ and $U_m^c \coloneq U\setminus U_m$.
Then, $\tau(V_m^c) = U_m^c$.
From the definition of the extended signed closure code, if $\uu_i \in U_m^c$, then $\uu_{\lceil \frac{i}{2} \rceil + 2n} = m\uu_i$.
Define
\begin{align*}
  U_m^{c+} & \coloneq \left\{ \uu_{\frac{i}{2} + 2n} \mid \uu_i \in U_m^c, \text{ $i$ is even} \right\}, \\
  U_{\leq 2n}^+ & \coloneq \left\{ \uu_{\frac{i}{2} + 2n} \mid \uu_i \in U_{\leq 2n}, \text{ $i$ is even} \right\}, \\
  V_m^{c+} & \coloneq \left\{ \vv_{\frac{i}{2} + 2n} \mid \vv_i \in V_m^c, \text{ $i$ is even} \right\}, \\
  V_{\leq 2n}^+ & \coloneq \left\{ \vv_{\frac{i}{2} + 2n} \mid \vv_i \in V_{\leq 2n}, \text{ $i$ is even} \right\}.
\end{align*}
Since $U_m^c=V_m^c$, $U_m^{c+}=V_m^{c+}$.
Because $U_m = U_m^{c+} \cup U_{\leq 2n} \cup U_{\leq 2n}^+$ and $V_m$ is the same, we obtain
\[
  U_{\leq 2n} \cup U_{\leq 2n}^+ = V_{\leq 2n} \cup V_{\leq 2n}^+.
\]
Additionally, since $m = -m = m^2$, we have
\begin{align*}
    \{\uu_{i}\in U_{\leq 2n}\mid \text{$i$ is even}\} &= \{\uu_{i}\in U_{\leq 2n}\mid \text{$i$ is odd}\} \\
    &= \{\uu_{i}\in U_{\leq 2n}^+\},
\end{align*}
which shows that the multiset $\{\uu_{i} \in U_{\leq 2n} \cup U_{\leq 2n}^+\}$ contains all vectors of $\{\uu_{i} \in U_{\leq 2n}, \text{$i$ is odd}\}$ exactly three times.
The same equation holds for $V_{\leq 2n}$.
Therefore,
\[
    U_{\leq 2n} = V_{\leq 2n}.
\]
From the above, there exists a permutation matrix $P' = \begin{pmatrix}
    P_{2n} & 0 \\
    0 & P_n
\end{pmatrix}$ such that
\begin{align*}
    \begin{pmatrix}
        G_1 T & m G_1
    \end{pmatrix} &= G_2 T_m P' \\
    &= \begin{pmatrix}
        G_2 T P_{2n} & m G_2 P_n
    \end{pmatrix}.
\end{align*}
This especially implies $G_1 T = G_2 T P_{2n}$ and $m G_1 = m G_2 P_n$.
The first equation shows a permutation code equivalence of $C_1^\pm$ and $C_2^\pm$, and from \Cref{lem:10a}, $C_1$ and $C_2$ are signed permutation code equivalent.
Therefore, if $C_1^e$ and $C_2^e$ are permutation code equivalent, then $C_1$ and $C_2$ are signed permutation code equivalent.
\end{proof}

\begin{lem}\label{lem:11b}
  Let $m$ be an odd \rev{integer} and $k=2m$.
  A code $C$ over $\ZZ_k$ is free and LCD if and only if $C^e$ is free and LCD.
\end{lem}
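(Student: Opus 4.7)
The plan is to reduce the lemma to a short matrix computation, using the characterization, already exploited in the proof of \Cref{thm:1a}, that a code with generator matrix $G$ is free and LCD if and only if $GG^\top$ is non-singular. First I would verify that the map $c \mapsto cT_m$ is a $\ZZ_k$-module isomorphism from $C$ onto $C^e$: surjectivity is immediate from the definition of $C^e$, and injectivity follows because $cT_m = 0$ forces $cT = 0$ on the first $2n$ coordinates, giving $(c_1,-c_1,\dots,c_n,-c_n)=0$, hence $c=0$. Consequently $C$ is free if and only if $C^e$ is free, and if $G$ is a generator matrix of $C$ with linearly independent rows, then $G^e := GT_m$ is a generator matrix of $C^e$ with linearly independent rows.

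Next I would compute $G^e(G^e)^\top$. Writing $T_m = (T \mid mI_n)$, we have
\[
  G^e(G^e)^\top = G T T^\top G^\top + m^2 GG^\top.
\]
Row $i$ of $T$ equals $e_{2i-1} - e_{2i}$, so distinct rows have disjoint supports and each has norm squared equal to $2$; hence $TT^\top = 2I_n$ and therefore $G^e(G^e)^\top = (2+m^2)\, GG^\top$.

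The heart of the argument is then showing that $2+m^2$ is a unit in $\ZZ_k = \ZZ_{2m}$. Since $m$ is odd, $m^2$ is odd, so $2+m^2$ is odd, giving $\gcd(2+m^2, 2) = 1$; and $2+m^2 \equiv 2 \pmod{m}$, so $\gcd(2+m^2, m) = \gcd(2, m) = 1$. Combining, $\gcd(2+m^2, 2m) = 1$, so $2+m^2$ is a unit in $\ZZ_k$. It follows that $G^e(G^e)^\top$ is non-singular if and only if $GG^\top$ is non-singular, and this, together with the freeness transfer above and the characterization applied to both $C$ and $C^e$, yields the desired biconditional.

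The main obstacle is not computational but rather that the whole argument hinges on $2+m^2$ being a unit modulo $k$. This is precisely where the hypothesis that $k$ is even but not divisible by $4$ is essential: if $m$ were even (that is, $4 \mid k$), then $2+m^2$ would be even and share the factor $2$ with $2m$, failing to be a unit, so one could no longer transfer non-singularity between $GG^\top$ and $G^e(G^e)^\top$. This explains the restriction to $k = 2m$ with $m$ odd in \Cref{ssec:4-2}.
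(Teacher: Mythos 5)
Your proposal is correct and is essentially the paper's argument: the crux in both is that $\xx^e\cdot\yy^e=(m^2+2)\,\xx\cdot\yy$ (equivalently $G^e(G^e)^\top=(m^2+2)GG^\top$) together with the fact that $m^2+2$ is a unit in $\ZZ_{2m}$ when $m$ is odd. You merely repackage the paper's codeword-level inner-product computation as a Gram-matrix identity and invoke the characterization ``free and LCD $\iff GG^\top$ non-singular,'' which the paper also records; your explicit $\gcd$ verification that $m^2+2$ is a unit is a welcome bit of added detail.
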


\begin{proof}
  The freeness of $C^e$ also holds by the same argument as in \Cref{lem:11a}.
  We now consider the LCD property.
  For any vectors $\xx=(x_1,x_2,\ldots,x_n)$ and $\yy=(y_1,y_2,\ldots,y_n)$ in $C$,
  \begin{align*}
    \xx^e &= (x_1,-x_1,\ldots,x_n,-x_n,mx_1,\cdots, mx_n), \\
    \yy^e &= (y_1,-y_1,\ldots,y_n,-y_n,my_1,\cdots,my_n)  
  \end{align*}
  are also in $C^e$, and 
  \[
    \xx^e\cdot \yy^e = (m^2+2)\xx\cdot\yy.
  \]
  Since $k=2m$ and $m$ is odd, $m^2+2$ is a unit over $\ZZ_k$.
  Therefore, $\xx\cdot\yy=0$ and $\xx^e\cdot \yy^e=0$ are equivalent.
  This implies that $C$ is LCD if and only if $C^e$ is LCD.
\end{proof}

From \Cref{lem:10b} and \Cref{lem:11b}, we can generalize \Cref{thm:maina} for $k$ being an even number and not divisible by $4$.

\begin{thm}\label{thm:mainb}
  Let $k$ be an even number and not divisible by $4$, and let $C$ be a free and LCD code over $\ZZ_{k}$ with length $n$.
  For any $i \in \{1, 2\}$, let $O_i$ be an orthonormal matrix of size $n$ and define $L_i = O_i(C + k\ZZ^n)$.
  Then, the LIP of $L_1$ and $L_2$ can be reduced to $\ZZ$LIP and GI \rev{for graphs with vertex size} $3n$.
\end{thm}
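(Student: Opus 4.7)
The plan is to chain the four lemmas already established in this section in exactly the same manner as in the proof of \Cref{thm:maina}, using \Cref{lem:10b} and \Cref{lem:11b} in place of \Cref{lem:10a} and \Cref{lem:11a}. The hypothesis ``$k$ even and not divisible by $4$'' is exactly the hypothesis $k=2m$ with $m$ odd, which is the setting in which the extended signed closure $C^{e}$ was defined and analyzed, so all the required lemmas apply.

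First, I would apply \Cref{lem:9a}, which holds for an arbitrary positive integer $k$, to reduce the LIP of $L_1$ and $L_2$ to the $\ZZ$LIP together with the SPEP of free and LCD codes over $\ZZ_k$ of length $n$. (As in the proof of \Cref{thm:maina}, the value of $k$ itself can first be recovered from $\det(GG^\top)=k^{n-m}$ for a generator matrix $G$ of $L_i$, so no knowledge of $k$ beyond the lattices is needed.) Next, write $k=2m$ with $m$ odd and invoke \Cref{lem:10b} to turn this SPEP over $\ZZ_k$ of length $n$ into the PEP of the extended signed closures $C_1^{e}$ and $C_2^{e}$, which are codes over $\ZZ_k$ of length $3n$. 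Then \Cref{lem:11b} guarantees that the free and LCD properties are preserved by the $(\cdot)^{e}$ operation, so we obtain a PEP instance for free and LCD codes of length $3n$.

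Finally, I would apply \Cref{thm:1a}, which is stated for any $k$ and only requires the codes to be free and LCD, to reduce this PEP to the graph isomorphism problem for the symmetric matrices $(G_i^{e})^\top(G_i^{e}(G_i^{e})^\top)^{-1}G_i^{e}$ of size $3n\times 3n$, yielding GI on graphs with $3n$ vertices. Composing these reductions gives precisely the statement of \Cref{thm:mainb}.

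There is no real obstacle in the proof itself; all the genuine work has been done in \Cref{lem:10b} and \Cref{lem:11b}, where the new construction $C^{e}$ was tailored so that the factor $2$, which fails to be a unit in $\ZZ_k$ for even $k$, is replaced by $m^{2}+2$, a unit in $\ZZ_{2m}$ when $m$ is odd. The only care needed in writing the proof is to verify that the codes produced at each step of the chain remain free and LCD so that the next lemma applies, which is exactly what \Cref{lem:11b} ensures; the rest is bookkeeping of the reduction diagram analogous to \Cref{fig:1}.
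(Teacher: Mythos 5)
Your proposal is correct and follows exactly the same chain of reductions as the paper's own proof: \Cref{lem:9a} to get $\ZZ$LIP plus SPEP of length $n$, then \Cref{lem:10b} and \Cref{lem:11b} to pass to a PEP for free and LCD codes of length $3n$ via the extended signed closure, and finally \Cref{thm:1a} to reach GI on $3n$ vertices. No substantive difference from the paper's argument.
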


\begin{proof}
  From \Cref{lem:9a}, the LIP for lattices obtained by Construction A from any LCD code over $\ZZ_k$ with length $n$ can be reduced to the $\ZZ$LIP and SPEP for LCD codes over $\ZZ_k$ with length $n$, and from \Cref{lem:10b}, this SPEP can be reduced to the PEP for codes over $\ZZ_k$ with length $3n$. 
  Additionally, from \Cref{lem:11b}, the code obtained by \Cref{lem:10b} preserves the properties of being free and LCD. 
  Finally, from \Cref{thm:1a}, the PEP for free and LCD codes can be reduced to the graph isomorphism problem, yielding \Cref{thm:mainb}.    
\end{proof}




\begin{thebibliography}{99}



\bibitem{apw25}
Bill Allombert, Alice Pellet-Mary, and Wessel van Woerden,
\newblock Cryptanalysis of rank-2 module-LIP: a single real embedding is all it takes. 
\newblock \emph{Advances in Cryptology - {EUROCRYPT} 2025, Proceedings, Part {II}}, 184--212, 2025.


\bibitem{B2016}
László Babai.
\newblock Graph isomorphism in quasipolynomial time. 2015. http://arxiv.org/abs/1512.03547.


\bibitem{BAA2019}
Magali Bardet, Ayoub Otmani, and Mohamed~Ahmed Saeed.
\newblock Permutation code equivalence is not harder than graph isomorphism when hulls are trivial.
\newblock \emph{In 2019 IEEE International Symposium on Information Theory(ISIT)}, 2464-2468, 2019.


\bibitem{cfmpw25}
Cl{\'{e}}mence Chevignard, Guilhem Mureau, Thomas Espitau, Alice Pellet{-}Mary, Heorhii Pliatsok, and Alexandre Wallet.
\newblock A reduction from hawk to the principal ideal problem in a quaternion algebra.
\newblock \emph{Advances in Cryptology - {EUROCRYPT} 2025, Proceedings, Part {II}}, 154--183, 2025.


\bibitem{cs99}
Conway, J. H. Sloane, N. J. A,
\newblock Sphere packings, lattices and groups.
\newblock Third edition. With additional contributions by E. Bannai, R. E. Borcherds, J. Leech, S. P. Norton, A. M. Odlyzko, R. A. Parker, L. Queen and B. B. Venkov, 
Grundlehren Math. Wiss., 290 [Fundamental Principles of Mathematical Sciences]
\newblock \emph{Springer-Verlag, New York}, 1999. 

\bibitem{DPPW2022}
Léo Ducas, Eamonn~W. Postlethwaite, Ludo~N. Pulles, and Wessel van Woerden.
\newblock Hawk: Module {LIP} makes lattice signatures fast, compact and simple.
\newblock \emph{Advances in Cryptology -- ASIACRYPT 2022, Proceedings, Part {IV}}, 65--94, 2022.

\bibitem{DG2023}
Léo Ducas and Shane Gibbons.
\newblock Hull attacks on the lattice isomorphism problem.
\newblock \emph{Public-Key Cryptography – PKC 2023}, 177-204, 2023.

\bibitem{DW2022}
Léo Ducas and Wessel van Woerden.
\newblock On the lattice isomorphism problem, quadratic forms, remarkable lattices, and cryptography.
\newblock \emph{Advances in Cryptology -- EUROCRYPT 2022, Proceedings, Part {III}}, 643--673, 2022.

\bibitem{D2020}
Yılmaz Durǧun.
\newblock On lcd codes over finite chain rings.
\newblock {\em Bulletin of The Korean Mathematical Society}, 57:37--50, 2020.

\bibitem{Hawk}
Joppe W. Bos, Olivier Bronchain, Léo Ducas, Serge Fehr, Yu-Hsuan Huang, Thomas Pornin, Eamonn W. Postlethwait, Thomas Prest, Ludo N. Pulles, and Wessel van Woerden, 
\newblock HAWK version 1.1. 2025. https://hawk-sign.info. 


\bibitem{TI2000}
Thomas Honold and Ivan Landjev.
\newblock Linear codes over finite chain rings.
\newblock {\em Electr. J. Comb.}, 7, 01 2000.

\bibitem{LL2015}
Xiusheng Liu and Hualu Liu.
\newblock Lcd codes over finite chain rings.
\newblock {\em Finite Fields and Their Applications}, 34:1--19, 2015.

\bibitem{mpw24}  
Guilhem Mureau, Alice Pellet-Mary, Heorhii Pliatsok, and Alexandre Wallet, 
\newblock Cryptanalysis of rank-2 module-LIP in Totally Real Number Fields. 
\newblock {\em Advances in Cryptology – EUROCRYPT 2024, Proceedings, Part {VI}}, 226--255, 2024.

\bibitem{SHVvW2020}
Mathieu~Dutour Sikirić, Anna Haensch, John Voight, and Wessel P.~J. van Woerden.
\newblock A canonical form for positive definite matrices, In \emph{ANTS}, volume 14, 2020.

\bibitem{ss13a}
Nicolas Sendrier and Dimitris Simos.
\newblock How easy is code equivalence over $\FF_q$?
\newblock \emph{In International Workshop on Coding and Cryptography -WCC 2013, 
Apr 2013, Bergen, Norway}, 2013. https://hal.inria.fr/hal-00790861v2.

\bibitem{ss13b}
Nicolas Sendrier and Dimitris~E. Simos.
\newblock The hardness of code equivalence over $\FF_q$ and its application to code-based cryptography.
\newblock {\em Post-Quantum Cryptography}, 203--216, 2013.






\end{thebibliography}




\section*{Acknowledgement}



This work is supported by JSPS Grant-in-Aid for Scientific Research(C) JP22K11912, JST K Program Grant Number JPMJKP24U2, and MEXT Quantum Leap Flagship 
Program (MEXT Q-LEAP) JPMXS0120319794.

\end{document}